\documentclass[sigconf]{acmart}
\AtBeginDocument{%
  }

\usepackage{textcomp}
\usepackage{setspace}
\usepackage{latexsym,fancyhdr,url}
\usepackage{enumerate}
\usepackage[linesnumbered,ruled]{algorithm2e}
\usepackage{graphics}
\usepackage{xparse} 
\usepackage{xspace}
\usepackage{multirow}
\usepackage{csvsimple}
\usepackage{balance}
\usepackage{amsmath}
\usepackage{booktabs}
\usepackage{enumitem}
\usepackage{colortbl}
\usepackage{fancyhdr}
\usepackage{makecell}
\usepackage{xcolor, eucal}
\usepackage{array}
\usepackage{subfigure}
\usepackage{graphicx}
\usepackage{caption}
\usepackage{fix-cm}
\usepackage{tabularx}
\usepackage{dsfont}
\usepackage{bbding}
\usepackage{pifont}
\usepackage{url}
\definecolor{gray0}{gray}{0.9}
\definecolor{mine}{RGB}{205, 232, 248} 
\usepackage{algpseudocode}

\usepackage{graphicx}
\usepackage{textcomp}
\usepackage{booktabs}
\usepackage{hyperref}
\usepackage{multirow}
\usepackage{makecell}
\usepackage{tikz}

\usepackage{amsmath,amsthm}
\usepackage{epsfig,endnotes}
\usepackage{xspace}
\usepackage{enumitem}
\usepackage{cleveref}

\def \toolname{DP-GCL\xspace}

\newtheorem{definition}{\bf Definition}[]
\newtheorem{theorem}{\bf Theorem}

\setcopyright{acmlicensed} 
\copyrightyear{2026} 
\acmYear{2026} 
\acmDOI{XXXXXXX.XXXXXXX} 
\acmConference[CCS '26]{Make sure to enter the correct
  conference title from your rights confirmation email}{November 15--19,
  2026}{Hague, Netherlands}  
\acmISBN{978-1-4503-XXXX-X/2018/06}  

\begin{document}
\pagestyle{plain}
%
\def\thetitle{Differentially Private Contrastive Learning via Bounding Group-level Contribution}

\title{\thetitle}

\author{Kecen Li}
\affiliation{
    \institution{National University of Singapore}\country{Singapore}
}

\author{Chen Gong}
\authornote{Corresponding Author. }
\affiliation{
    \institution{University of Virginia}\country{USA}
}

\author{Zinan Lin}
\affiliation{
    \institution{Microsoft Research}\country{USA}
}

\author{Tianhao Wang}
\affiliation{
    \institution{University of Virginia}\country{USA}
}

\author{Xiaokui Xiao}
\affiliation{
    \institution{National University of Singapore}\country{Singapore}
}

\date{}
\fancyhf{} 
\fancyfoot[C]{\thepage}

\begin{abstract}
Differentially private (DP) contrastive learning aims to learn general-purpose representations from sensitive data, alleviating the privacy leakage concerns of organizations deploying or sharing embedding models trained on private user content.
However, existing approaches suffer from severe utility degradation due to the over-strong inter-sample dependency inherent in standard contrastive objectives, where each sample’s gradient depends on all other samples in the batch, amplifying the impact of DP noise. 

In this work, we argue that effective DP contrastive learning requires explicitly reducing such intrinsic inter-sample reliance. To this end, we propose DP-GCL, a principled DP contrastive learning framework that structurally limits gradient dependency through bounding group-level contribution. DP-GCL partitions each batch into small, disjoint groups and restricts available negative samples to within-group samples, thereby localizing gradient influence and reducing sensitivity. To counteract the resulting loss of negative sample diversity, we further introduce intra-group augmentation, which generates additional negative views without increasing privacy cost. Extensive experiments across eight datasets demonstrate that DP-GCL consistently advances the state of the art in both uni-modal and multi-modal contrastive learning under practical privacy budgets: it improves image classification accuracy by 5.6\% and image-text retrieval accuracy by 20.1\% over existing DP contrastive methods. 

\end{abstract}
\begin{CCSXML}
<ccs2012>
<concept>
<concept_id>10002978.10002991.10002995</concept_id>
<concept_desc>Security and privacy~Privacy-preserving protocols</concept_desc>
<concept_significance>500</concept_significance>
</concept>
</ccs2012>
\end{CCSXML}

\ccsdesc[500]{Security and privacy~Privacy-preserving protocols}

\keywords{Differential Privacy, Privacy Preserving, Contrastive Learning}

\settopmatter{printacmref=false}
\maketitle
\section{Introduction}
\label{sec:introduction}

Contrastive learning~\cite{ContrastiveLearning} has been a cornerstone of modern self-supervised representation learning, which enables training an embedding model to acquire rich, transferable features without relying on human-annotated labels. 
Contrastive learning methods like SimCLR~\cite{simclr}, MoCo~\cite{moco}, and CLIP~\cite{CLIP} not only match or exceed their supervised counterparts on downstream benchmarks, but have also become foundational components in large-scale foundation models. These models are increasingly deployed in real-world applications, ranging from content recommendation~\cite{cl_recom1} to medical image analysis~\cite{cl_medi1}.  However, various previous works have shown that contrastive learning models face serious privacy concerns~\cite{attackcl1,attackcl2}. For example, input reconstruction attacks can recover the original training images or data points just by looking at the learned embeddings~\cite{attackcl1}, while membership inference attacks (MIAs) exploit statistical differences in embedding similarities to distinguish whether a sample was used during training~\cite{attackcl2,attackcl3}.

Differential privacy (DP)~\cite{dp} offers a principled remedy to privacy leakage by providing a formal guarantee that the presence or absence of any single data point does not significantly alter the model’s output distribution. While DP has been successfully applied to supervised learning (e.g., data classification~\cite{dp_classification1,dp_classification2}) via techniques like DP-SGD~\cite{dpsgd}, its integration with contrastive learning remains underexplored and nontrivial. Unlike supervised learning, where the loss and gradient of a sample depend only on its own label and prediction, the contrastive learning (introduced in~\Cref{subsec:cl}) exhibits an intrinsic inter-sample dependency: the loss for each sample is computed relative to other samples in the same batch. Consequently, the gradient of any single sample is not self-contained, but implicitly influenced by many other data points, which challenges the direct application of standard DP-SGD to contrastive learning. This paper focuses on addressing the question: ``\textit{How to achieve effective DP contrastive learning under the intrinsic inter-sample dependency?''}

Existing solutions have attempted various adaptations of DP-SGD for contrastive learning~\cite{batch-level,pair-level}. Huang et al.~\cite{batch-level} clip the aggregated batch gradient instead of per-sample gradients. Kong et al.~\cite{pair-level} decompose the gradient of contrastive loss into pairwise terms and clip them accordingly. While these approaches offer pragmatic workarounds, none directly address the root cause: {\it the intrinsic inter-sample dependency} induced by the contrastive objective itself. As a result, they suffer from severe utility degradation. This paper proposes that a more principled direction for effective DP contrastive learning lies in explicitly mitigating the over-strong inter-sample dependency inherent in  objectives. In other words, the key to promoting DP contrastive learning is to design learning objectives or gradient computation that inherently reduce the reliance of each sample’s gradient on arbitrary other samples, while preserving meaningful comparison relationships.

To this end, we propose \toolname, a DP contrastive learning framework that structurally reduces inter-sample dependency. \toolname is composed of two key modules. (1) \textit{Bounding Group-level Contribution (BGC):} \toolname partitions each batch into multiple disjoint groups, where each sample is only compared with other samples within the same sub-group. Gradients computed within each sub-group are individually clipped and then aggregated to update the model. (2) \textit{Intra-Group Sample Augmentation (ISA):} 
Contrastive learning relies on diverse negative samples to learn meaningful representations. Grouping reduces the number of available negatives, weakening learning signals. ISA addresses this by generating augmented variants within each group, artificially increasing negative diversity without introducing additional privacy cost.

\noindent \textbf{Our Evaluations.}  
We evaluate \toolname against existing approaches on both uni-modal and multi-modal contrastive learning tasks. Across eight datasets and under privacy budgets $\epsilon \in \{1,10\}$, \toolname outperforms the current state-of-the-art method, DP-CLIP~\cite{batch-level},  by an average of 5.6\% in uni-modal (image) classification accuracy and 20.1\% in multi-modal (image-text) retrieval accuracy. Moreover, \toolname presents strong scalability with respect to large batch size (e.g., 8192), making it well-suited for large-scale private multi-modal contrastive learning. This gain stems from two core modules of \toolname, bounding group contribution and intra-group sample augmentation (detailed in~\Cref{sec:method}). Both modules mitigate the adverse effects of noise on gradients, thereby enhancing the signal-to-noise ratio. As a result, the model training converges faster and achieves better balances between privacy protection and utility. 
We further study how model architectures, fine-tuning methods, and hyper-parameters influence performance. \toolname shows strong generalization across various contrastive learning configurations, such as LoRA and various Vision Transformer (ViT) architectures. 
Besides, \toolname exhibits robust performance across various settings with consistent hyper-parameter choices. 

\noindent \textbf{Contributions.} We list our contributions as follows:
\begin{itemize}[leftmargin=*]
    \item We propose \toolname, a novel DP contrastive learning framework that explicitly mitigates inter-sample dependency by bounding group-level contribution.
    \item We introduce intra-group augmentation to enrich negative sample diversity within each group, effectively preserving strong contrastive signals under DP constraint.
    \item We conduct comprehensive evaluations across both uni-modal and multi-modal settings, demonstrating consistent and significant improvements over state-of-the-art DP contrastive learning methods on eight datasets under varying privacy budgets.
    \item We analyze scalability and practical deployment considerations, showing that \toolname benefits from larger batch sizes and remains effective under parameter-efficient fine-tuning and diverse model architectures, highlighting its architectural flexibility and real-world applicability. We release the replication package on the link.\footnote{\url{https://github.com/SunnierLee/DP-GCL}}
\end{itemize}

\section{Background}
\label{sec:background}

\subsection{Differential Privacy} 
\label{subsec:dp}

\noindent \textbf{Definition.}  
Differential privacy (DP)~\cite{dp} safeguards individual privacy by ensuring that the presence or absence of any single data record in a dataset has only a limited influence on the output of a randomized algorithm. Formally, DP is defined as follows.

\begin{definition}[($\epsilon, \delta$)-Differential Privacy~\cite{dp}]
\label{def:dp}
A randomized algorithm $\mathcal{M}$ satisfies $(\epsilon, \delta)$-differential privacy for parameters $\epsilon > 0$ and $\delta \geq 0$ if, for every pair of adjacent datasets $\mathcal{D}$ and $\mathcal{D}'$, and for all measurable subsets $O$ of the output space of $\mathcal{M}$, the following inequality holds:
\begin{equation}
    \Pr[\mathcal{M}(\mathcal{D}) \in O] \leq e^\epsilon \cdot \Pr[\mathcal{M}(\mathcal{D}') \in O] + \delta.
\end{equation}
\end{definition}

\noindent Two datasets $\mathcal{D}$ and $\mathcal{D}'$ are considered \emph{adjacent}, denoted $\mathcal{D} \simeq \mathcal{D}'$, if one can be obtained from the other by adding or removing a single data record. The parameter $\epsilon$, known as the \emph{privacy budget}, quantifies the maximum allowable privacy loss: smaller values of $\epsilon$ mean stronger privacy guarantees. The parameter $\delta$ allows for a small probability of additional privacy loss beyond the bound dictated by $\epsilon$. When $\delta = 0$, the definition reduces to \emph{pure} $\epsilon$-DP.

\noindent \textbf{Gaussian Mechanism.}  
The Gaussian mechanism (GM) achieves DP by adding noise drawn from a Gaussian distribution to the output of a function. To calibrate the noise, we first define the $\ell_2$ sensitivity of a function $f$: $\Delta_f = \max_{\mathcal{D} \simeq \mathcal{D}'} \left\| f(\mathcal{D}) - f(\mathcal{D}') \right\|_2$, which captures its worst-case change under the modification of a single data point. Given a function $f$ with sensitivity $\Delta_f$, the GM $\mathcal{M}$ is defined as:
\begin{equation}\label{eq:gm}
    \mathcal{M}(\mathcal{D}) = f(\mathcal{D}) + \mathcal{N}\left(0, \Delta_f^2 \sigma^2 \mathbb{I}\right),
\end{equation}
where $\mathcal{N}\big(0, \Delta_f^2 \sigma^2 \mathbb{I}\big)$ denotes a multivariate Gaussian random variable with independent components, each distributed as $\mathcal{N}(0, \Delta_f^2 \sigma^2)$. The scalar $\sigma > 0$ is a hyperparameter that controls the amount of added noise and directly influences the privacy guarantees~\cite{dp}.

\noindent \textbf{DP-SGD.}  
In machine learning, the most widely adopted method for satisfying DP is DP Stochastic Gradient Descent (DP-SGD)~\cite{dpsgd}. DP-SGD modifies standard SGD by perturbing the parameter gradients with calibrated Gaussian noise at each training step. Specifically, the model parameters $\theta$ are updated as follows:
\begin{align*}
\theta \gets \theta - \frac{\lambda}{|b|} \left( \sum_{x_i \in b} \operatorname{Clip}_C\big(\nabla_{\theta} \mathcal{L}(x_i)\big) + \mathcal{N}\big(0,\, C^2 \sigma^2 \mathbb{I}\big) \right),
\end{align*}
where $\lambda$ is the learning rate, $b$ is a randomly sampled mini-batch of training examples, $\nabla_{\theta} \mathcal{L}(x_i)$ denotes the gradient of the loss $\mathcal{L}$ with respect to the model parameters $\theta$ for a single example $x_i$, $\operatorname{Clip}_C(\cdot)$ is a clipping operation that rescales the gradient vector to have $\ell_2$ norm at most $C$, $\mathcal{N}(0, C^2 \sigma^2 \mathbb{I})$ is isotropic Gaussian noise, and $\sigma$ is the noise multiplier that governs the privacy-utility trade-off.

DP-SGD is an application of the Gaussian mechanism to the clipped batch gradient function  
\(f(\theta, b) = \sum_{x_i \in b} \)\(\operatorname{Clip}_C\big(\nabla_{\theta} \mathcal{L}(x_i)\big)\),  
whose $\ell_2$ sensitivity is bounded by $C$. By injecting noise proportional to this sensitivity, DP-SGD prevents the model from overfitting to individual training examples and limits the memorization of sensitive data, thereby providing strong privacy guarantees.

\subsection{Contrastive Learning} 
\label{subsec:cl}
Contrastive learning~\cite{ContrastiveLearning} is a self-supervised representation learning paradigm that aims to learn meaningful embeddings by pulling together positive pairs and pushing apart negative pairs. Let $\mathcal{B} = {\{(x_i, x_i^+ \in \mathbb{R}^{d_x})\}}_{i=1}^B$ be a batch with $B$ data pairs, where $x_i$ and $x_i^+$ are positive pairs (e.g., for uni-modal datasets, $x_i$ could be an image and $x_i^+$ is a augmented version of $x_i$, while for multi-modal datasets, $x_i$ could be an image and $x_i^+$ is its text description), 
and $s_{i,j}$ be some similarity metric between $x_i$ and $x_j^+$. A commonly used metric is the cosine similarity,
\begin{equation}
\label{eq:sim}
    s_{i,j} = \frac{\mathcal{E}(x_i)^\top\mathcal{E}(x_j^+)}{\|\mathcal{E}(x_i)\|_2 \|\mathcal{E}(x_j^+)\|_2},
\end{equation}
where $\mathcal{E}$ is an embedding model $\mathcal{E} : x \in \mathbb{R}^{d_x} \rightarrow z \in \mathbb{R}^{d_z}$ parameterized with $\theta$.
The basic objective of contrastive learning is training $\mathcal{E}$ to maximize  $s_{i,i}$ and minimize $s_{i,j}$, where $i \neq j$.
Thus, one of the earliest approaches proposes the pair-wise loss function~\cite{pairwise_cl},
\begin{equation}
    \mathcal{L}_\text{pair} = \sum_{i=1}^B \left(-s_{i,i} + \max \left(0, s_{i,j_i}-m\right)\right),
    \nonumber
\end{equation}
where $j_i$ is an index randomly sampled from $\{1,\dots,B\}$ such that $j_i \neq i$. The margin $m$ controls how far negative pairs should be pushed apart. Schroff et al. propose triplet-based objectives~\cite{triple_cl},
\begin{equation}
    \mathcal{L}_\text{triple} = \sum_{i=1}^B \max \left(0, -s_{i,i} + s_{i,j_i} + m\right),
    \nonumber
\end{equation}
which enforces that a sample is closer to its positive example than to a negative one by a fixed margin $m$. While intuitive and easy to implement, these methods suffer from inefficient use of negative samples and often require careful negative sample selection strategies to ensure efficient convergence, especially on large datasets~\cite{npair_cl}. 

InfoNCE loss~\cite{simclr}, now widely adopted in modern contrastive representation learning, generalizes previous ideas to enable stable and effective learning with large numbers of negatives.
Formally, InfoNCE loss function is defined as,
\begin{equation}
\label{eq:simclr}
    \mathcal{L}_\text{NCE} = \sum_{i=1}^B-\log \frac{e^{{s_{i,i}}/{\tau}}}{\sum_{j=1}^B e^{{s_{i,j}}/{\tau}}},
\end{equation}
where $\tau$ is a temperature hyper-parameter, which can be fixed or learnable during training~\cite{CLIP}. Compared to pair-wise and triplet-based loss, InfoNCE loss treats contrasting as a classification problem over all samples in the batch, and has a similar loss structure with Cross Entropy loss~\cite{CrossEntropy} that is commonly used in classification. Given the learning rate $\lambda > 0$, the embedding model $\mathcal{E}$, which aims to minimize~\Cref{eq:simclr}, is typically trained iteratively using stochastic gradient descent (SGD) as $\theta = \theta - \lambda g$. The gradient $g$ is,
\begin{equation}
\label{eq:simclr_grad}
    g = \nabla_\theta \mathcal{L}_\text{NCE} = \sum_{i=1}^B \nabla_\theta - \log \frac{e^{{s_{i,i}}/{\tau}}}{\sum_{j=1}^B e^{{s_{i,j}}/{\tau}}},
\end{equation}
where $\theta$ is the parameter of the embedding model $\mathcal{E}$. Through minimizing~\Cref{eq:simclr}, $\mathcal{E}$ learns to extract meaningful features from the data, and then can be further finetuned for other tasks, such as classification~\cite{CLIP}, dense retrieval~\cite{clip-retrieval}, and data generation~\cite{ldm}. Although other contrastive loss functions exist, some of which reduce reliance on negative samples~\cite{byol,siglip}, we focus on InfoNCE following previous works~\cite{pair-level,batch-level}, given its widespread adoption in state-of-the-art self-supervised learning frameworks~\cite{CLIP,simclr,moco}.




\subsection{DP Contrastive Learning} 
\label{subsec:dpcl}

DP contrastive learning aims to design a DP training algorithm for the InfoNCE loss function (\Cref{eq:simclr}). Before introducing our solution, we make a systematic review of existing approaches, and analyze their pros and cons. 
As introduced in~\Cref{subsec:cl}, each term in~\Cref{eq:simclr_grad} is
related to all data, thus it is key to bound the contributions of each data when computing the model gradient, and reduce the sensitivity of computing gradients.


\subsubsection{Sample-level Bounding}  
The most straightforward approach to designing a DP contrastive learning algorithm is to directly apply standard DP-SGD~\cite{dpsgd} to the InfoNCE loss. In this formulation, gradients are computed per sample and then clipped to bound individual contributions,
\begin{equation}
\label{eq:f_sample}
    g_\text{sample} = \sum_{i=1}^B \text{Clip}_C\Biggl(\nabla_\theta -\log \frac{e^{{s_{i,i}}/{\tau}}}{\sum_{j=1}^B e^{{s_{i,j}}/{\tau}}}\Biggr).
\end{equation}
This strategy implements \textit{sample-level} bounding and naturally preserves the gradient accumulation effect: as batch size \(B\) increases, the aggregated gradient norm grows, potentially improving the signal-to-noise ratio. However, because each individual gradient depends on all samples in the batch through the denominator in~\Cref{eq:f_sample}, changing a single sample affects every term in the sum. Consequently, the sensitivity scales as \(\Delta_{g_\text{sample}} = (2B+1)C\), which becomes prohibitively large for even moderate batch sizes. As a result, the added noise (per~\Cref{eq:gm}) overwhelms the useful signal, undermining utility.

\subsubsection{Batch-level Bounding}  
To mitigate this high sensitivity, Huang et al.~\cite{batch-level} proposed DP-CLIP, which instead applies clipping after aggregating gradients across the batch,
\begin{equation}
\label{eq:f_batch}
    g_\text{batch} = \text{Clip}_C\Biggl(\sum_{i=1}^B \nabla_\theta -\log \frac{e^{{s_{i,i}}/{\tau}}}{\sum_{j=1}^B e^{{s_{i,j}}/{\tau}}}\Biggr).
\end{equation}
This \textit{batch-level} bounding reduces sensitivity to a constant \(\Delta_{g_\text{batch}} = 2C\), independent of batch size \(B\), thereby achieving low sensitivity and enabling more stable privacy guarantees. However, by clipping the entire batch gradient as a single unit, this approach loses the gradient accumulation effect: increasing \(B\) no longer strengthens the signal relative to the fixed noise scale \(4C^2\sigma^2\). Consequently, larger batches do not yield the expected privacy-utility improvements that is commonly observed in other DP training settings~\cite{opacus,imagenet_dp}.

Both strategies reveal a common root limitation: the InfoNCE loss induces an \textit{over-strong intrinsic inter-sample dependency}, wherein every sample influences every other. This limitation simultaneously inflates sensitivity (as in sample-level clipping) and obstructs scalable signal accumulation (as in batch-level clipping). To address this fundamental issue, we propose \toolname, a simple and effective DP contrastive learning algorithm based on \textit{group-level contribution bounding} and \textit{intra-group sample augmentation}. \toolname achieves both minimal sensitivity $2C$ and preserves the gradient accumulation effect by carefully decoupling inter-sample interactions. We detail DP-GCL in~\Cref{sec:method}.

\section{Methodology}
\label{sec:method}


\toolname is an effective DP contrastive learning method, which consists of two key modules: (1) bounding group-level contribution and (2) intra-group sample augmentation for strong gradient accumulation effect and low sensitivity. For neighboring datasets, we adopt the following definition as in previous works~\cite{pair-level,batch-level}. 


This paper uses the definition of neighboring datasets in add-or-remove DP~\cite{dp}. Specifically, given a dataset $\mathcal{D} = \{(x_i, x_i^+ )\}_{i=1}^{N}$, its neighboring dataset could be $\mathcal{D}' = \{(x_i, x_i^+)  \}_{i=1}^{N-1}$ or $\mathcal{D}' =  \{(x_i, x_i^+  )\}_{i=1}^{N+1}$. For {\it uni-modal} datasets, $x_i$ could be an image and $x_i^+$ is an augmented version of $x_i$, while for {\it multi-modal} datasets, $x_i$ could be an image and $x_i^+$ is its text description.

\begin{figure}
    \centering
    \includegraphics[width=0.95\linewidth]{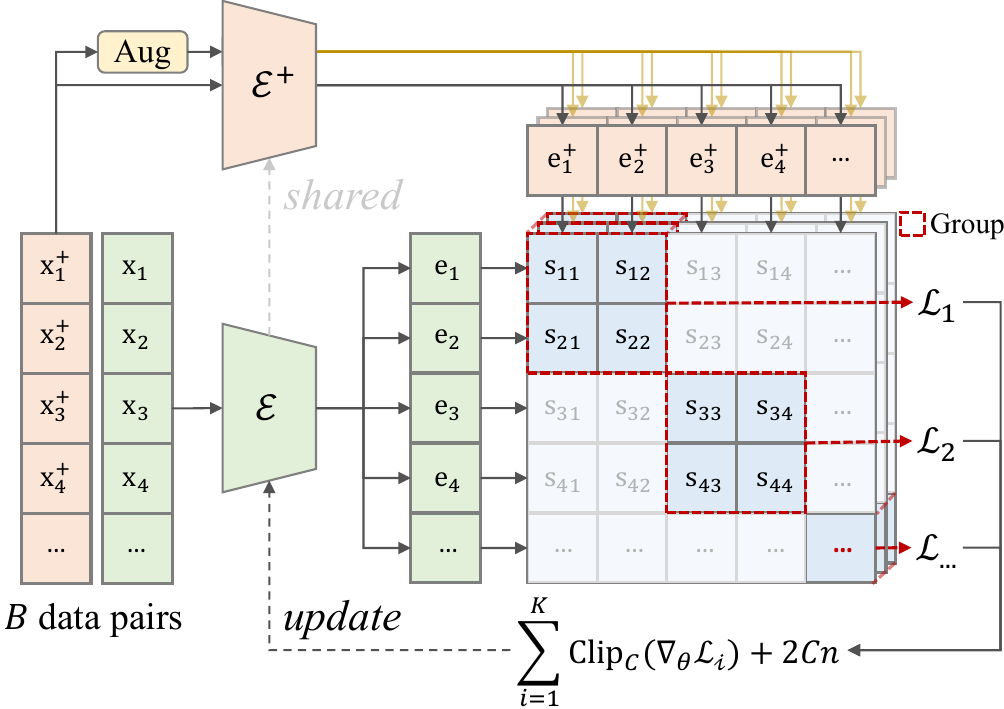}
    \caption{Overview of \toolname.  
Given \( B \) data pairs \( (x_i, x_i^+) \), we partition them into \(\lceil B/S \rceil\) disjoint groups. Within each group, every sample is only contrasted against other samples from the same group, and comparisons across different groups are explicitly discarded. We apply data augmentation within each group to generate diverse samples for comparison. The contrastive loss is then computed independently for each group. Each gradient is clipped individually and aggregated with Gaussian noise for DP guarantee. For uni-modal contrastive learning, the parameters of \(\mathcal{E}\) and \(\mathcal{E}^+\) are shared.}
    \label{fig:framework2}
\end{figure}

\subsection{Motivation}
\label{subsec:motivation}

As introduced in~\Cref{subsec:dpcl}, existing approaches to DP contrastive learning typically adopt one of two extremes for gradient clipping: either clipping the norm of the summed gradient over the entire batch (batch-level), or clipping the gradient of each individual sample before aggregation (sample-level). While simple, the former strategy fails to benefit from the signal amplification that comes with larger batch sizes. The latter preserves such desirable properties, however, coming at the cost of high sensitivity, which leads to excessive noise injection under DP.

Motivated by this observation, we first propose group clipping, a more flexible framework that partitions the batch into groups and applies clipping at the group level before final aggregation. This intermediate strategy retains the signal-amplification benefit of sample-level clipping while enabling more fine-grained control over sensitivity, which will be detailed in~\Cref{subsubsec:group_clipping}. Based on this group clipping framework, we introduce two key components: group-level negative samples, which ensure that negative samples are shared only within each group, and intra-group sample augmentation, which mitigates the reduction of negative samples diversity. Together, these designs align the structure of contrastive learning with the group clipping, forming our complete method, \toolname. 

\Cref{fig:framework2} presents the workflow of \toolname. Given \( B \) positive pairs \( (x_i, x_i^+) \), we randomly partition the data into multiple disjoint groups such that each group contains at most
\(S\) sample. Within each group, samples are contrasted exclusively against other members of the same group, while cross-group comparisons are dropped. To mitigate the reduction in negative sample diversity, we apply intra-group augmentations to generate additional negative samples. A separate contrastive loss is computed for each group, and the resulting gradients are individually clipped and aggregated with Gaussian noise to ensure DP. 

\subsection{Bounding Group Contribution (BGC)} 
\label{subsec:group_bounding}


\subsubsection{Group-level Clipping}\label{subsubsec:group_clipping} To preserve the gradient accumulation effect, we introduce a group-level clipping strategy. Given a data batch \(\mathcal{B} = \{(x_i, x_i^+)\}_{i=1}^{B}\), we randomly divide the sample indices \(\{1,\cdots,B\}\) into \(K = \lceil B/S \rceil\) disjoint groups \(\{\mathcal{G}_1 \dots , \mathcal{G}_K\}\) such that each group contains at most \(S\) samples. Then the gradients of each group are clipped separately. Finally, all clipped group-level gradients are aggregated. Formally, the gradient $g_{\text{g\_clip}}$ is,
\begin{equation}
\label{eq:f_group_clipping}
    g_{\text{g\_clip}} = \sum_{k=1}^{K} 
    \text{Clip}_C\Biggl(
        \nabla_\theta 
             \sum_{i \in \mathcal{G}_k}
            -\log \frac{e^{{s_{i,i}}/{\tau}}}{\sum_{j=1}^B e^{{s_{i,j}}/{\tau}}}
        \Biggr).
\end{equation}
Group-level clipping provides a more unified analytical framework that bridges batch-level and sample-level clipping: when $S=B$, group-level clipping reduces to batch-level clipping (\Cref{eq:f_batch}), and when $S=1$, it becomes equivalent to sample-level clipping (\Cref{eq:f_sample}). Due to this flexibility, our framework enables a better privacy–utility tradeoff. 
However, the sensitivity of~\Cref{eq:f_group_clipping} is still high especially when a small $S$ could be a better choice. Formally, we have the following theorem.

\begin{theorem}
\label{the:group_clipping}
    For any two neighboring datasets $\mathcal{D}$ and $\mathcal{D}'$, the $g_\text{g\_clip}$ has a bounded global sensitivity $\Delta_{g_\text{g\_clip}}$, where $\Delta_{g_\text{g\_clip}}$ is defined as $\max_{\mathcal{D} \sim \mathcal{D}'} \left| g_\text{g\_clip}(\mathcal{D}) - g_\text{g\_clip}(\mathcal{D}') \right|_2 = (2K+1)C$.
\end{theorem}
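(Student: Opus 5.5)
We prove the bound $(2K+1)C$ by comparing $g_{\text{g\_clip}}$ term-by-term across a fixed coupling of the batches, and then check tightness with a construction.

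\textbf{Coupling and reduction.} Without loss of generality let $\mathcal{D}'=\mathcal{D}\cup\{(x^*,x^{*+})\}$; the removal case is symmetric. We couple the sampled batch and its random partition so that the batch from $\mathcal{D}'$ equals the batch from $\mathcal{D}$ plus possibly the extra pair. The extra pair is either placed into one existing group $\mathcal{G}_{k^*}$ or forms a new singleton group, which is possible when $B$ is not a multiple of $S$ or when $K$ increases by one. Under this coupling, the two gradient sums have the form $\sum_{k=1}^K \text{Clip}_C(a_k)$ and $\sum_{k=1}^{K}\text{Clip}_C(a'_k)$, possibly plus one extra term $\text{Clip}_C(a'_{K+1})$.

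\textbf{Key observation.} In Equation~(\ref{eq:f_group_clipping}) the denominator $\sum_{j=1}^B e^{s_{i,j}/\tau}$ ranges over the \emph{whole} batch. Adding $x^{*+}$ therefore changes the loss of every $i$ in every group, so in general $a_k\neq a'_k$ for all $k$. This is exactly what makes the sensitivity grow with $K$. Since each clipped term has norm at most $C$, the triangle inequality gives
\begin{equation*}
\|\text{Clip}_C(a_k)-\text{Clip}_C(a'_k)\|_2\le 2C
\end{equation*}
for each of the $K$ shared groups. The extra term, if present, contributes at most $C$. This yields the upper bound $\|g_{\text{g\_clip}}(\mathcal{D})-g_{\text{g\_clip}}(\mathcal{D}')\|_2\le 2KC+C=(2K+1)C$. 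The count is the same as in the sample-level analysis, which gave $(2B+1)C$ and is recovered here with $S=1$, $K=B$.

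\textbf{Tightness.} To justify the equality, we exhibit a worst case. We choose embeddings and an $x^{*+}$ such that, before the addition, each group gradient $a_k$ has norm at least $C$ and points along a common unit direction $u$. After adding $x^{*+}$, which becomes a dominant negative for every anchor, each $a'_k$ should point along $-u$ with norm at least $C$, and the new group's clipped gradient should also equal $-Cu$. Then every difference aligns and the norm is exactly $(2K+1)C$.

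\textbf{Main obstacle.} The hard step is this lower-bound construction. It requires arguing that the model class is rich enough, or sensitivity is understood as a worst case over arbitrary gradient vectors as is conventional in these DP analyses, so that all group gradients can be simultaneously flipped. I would adopt the conventional interpretation, treating the clipped terms as arbitrary vectors of norm at most $C$ subject only to the dependency structure. The remaining subtlety is the coupling of partitions when $K$ changes, which I handle by letting $K$ denote the larger group count.
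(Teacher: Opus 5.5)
Your upper bound is the paper's argument. The paper applies the triangle inequality to the whole difference, bounding $\|g_{\text{g\_clip}}(\mathcal{D})\|_2\le KC$ and $\|g_{\text{g\_clip}}(\mathcal{D}')\|_2\le K'C$ with $K'\in\{K,K+1\}$, so your group-by-group coupling is harmless but not needed. The paper also never proves attainment, so your tightness discussion (valid only under the ``arbitrary clipped vectors'' reading) goes beyond what the paper establishes.

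Keep $K=\lceil B/S\rceil$ as the group count of the smaller dataset, as in your main computation. If $K$ is the larger count, as your last sentence proposes, then $(2K+1)C$ is still a valid upper bound, but it is no longer attained: the worst case is at most $2KC$.
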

\noindent It is noticed that \(K = \lceil B/S \rceil\). We provide the proof of \Cref{the:group_clipping} in Appendix~\ref{apsec:MissingProofs}. In our experiments, we find that the optimal group size $S$ can be as small as 8 or 16, which, however, leads to significantly high gradient sensitivity. To mitigate this issue, we propose to construct group-level negative samples such that the sensitivity is independent of $S$, enabling strong privacy guarantees without sacrificing much utility.

\subsubsection{Group-level Negative Samples} As introduced in~\Cref{subsec:cl}, the InfoNCE loss function requires comparing each sample against all others as negative examples. Consequently, every loss term depends on the entire batch, resulting in extremely high sensitivity under DP. Motivated by this observation, we propose to restrict each sample to contrast against a limited subset of negatives sampled within the batch data. To achieve this, we first divide the batch data $\mathcal{B}$ into $K$ disjoint sub-groups following the same dividing strategy in~\Cref{subsubsec:group_clipping}, and each sub-group has at most $S$ samples. Then, we modify the original InfoNCE loss to group-level loss as,
\begin{equation}
\label{eq:simclr_group}
    \mathcal{L}_\text{NCE}^\text{group} = \sum_{k=1}^K \sum_{i \in \mathcal{G}_k}- \log \frac{e^{{s_{i,i}}/{\tau}}}{\sum_{j \in \mathcal{G}_k} e^{{s_{i,j}}/{\tau}}}.
\end{equation}
Compared to the original InfoNCE loss function,~\Cref{eq:simclr_group} only considers negative samples within the same sub-group as the current $i$-th sample. As a result, the loss computation for each sample is isolated from samples in other sub-groups, significantly reducing its sensitivity. The gradient computation is then formalized as,
\begin{equation}
\label{eq:f_group}
    g_{\text{group}} = \sum_{k=1}^{K} 
    \text{Clip}_C\Biggl(
        \nabla_\theta 
             \sum_{i \in \mathcal{G}_k}
            -\log \frac{e^{{s_{i,i}}/{\tau}}}{\sum_{j \in \mathcal{G}_k} e^{{s_{i,j}}/{\tau}}}
        \Biggr).
\end{equation}
This gradient computation achieves group-level contribution bounding, and has a fixed sensitivity as follows:

\begin{theorem}
\label{the:group_grad}
    For any two neighboring datasets $\mathcal{D}$ and $\mathcal{D}'$, the $g_\text{group}$ has a bounded global sensitivity $\Delta_{g_\text{group}}$, where $\Delta_{g_\text{group}}$ is defined as $ \max_{\mathcal{D} \sim \mathcal{D}'} \left| g_\text{group}(\mathcal{D}) - g_\text{group}(\mathcal{D}') \right|_2 = 2C$.
\end{theorem}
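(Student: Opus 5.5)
We bound the sensitivity by tracking how the group structure changes when one record is added or removed. Write $g_k(\mathcal{B}) = \text{Clip}_C\bigl(\nabla_\theta \sum_{i \in \mathcal{G}_k} -\log \frac{e^{s_{i,i}/\tau}}{\sum_{j\in\mathcal{G}_k} e^{s_{i,j}/\tau}}\bigr)$, so that $g_\text{group} = \sum_{k=1}^K g_k$. Each $g_k$ has norm at most $C$. The key structural fact is that, by~\Cref{eq:simclr_group}, $g_k$ is a function only of the pairs $\{(x_i,x_i^+)\}_{i\in\mathcal{G}_k}$ and the current parameters $\theta$. Unlike in~\Cref{eq:f_group_clipping}, it does not depend on any sample outside $\mathcal{G}_k$.

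Following the standard convention for add-or-remove DP analyses, we couple the batch and grouping on $\mathcal{D}$ and $\mathcal{D}'$. Suppose $\mathcal{D}'$ has the extra record $(x^*,x^{*+})$. Consider the batch and partition formed from the records common to both datasets, with identical grouping on both sides. The extra record, if sampled, is placed into exactly one group, say $\mathcal{G}_{k^*}$. Under this coupling, every group $\mathcal{G}_k$ with $k\neq k^*$ consists of identical samples in both batches. Hence $g_k(\mathcal{D}) = g_k(\mathcal{D}')$, and those terms cancel in the difference.

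One of two cases then applies.
\begin{itemize}
\item \textbf{The extra record joins an existing group $\mathcal{G}_{k^*}$.} The difference is $g_{k^*}(\mathcal{D}) - g_{k^*}(\mathcal{D}')$. By the triangle inequality and the clipping bound, its norm is at most $\|g_{k^*}(\mathcal{D})\|_2 + \|g_{k^*}(\mathcal{D}')\|_2 \le 2C$.
\item \textbf{The extra record forms a new singleton group.} For instance, $K$ increases by one. The difference is then a single clipped vector of norm at most $C \le 2C$.
\end{itemize}
The case of removing a record is symmetric. Taking the maximum gives $\Delta_{g_\text{group}} \le 2C$. For the equality stated in the theorem, we exhibit a worst case. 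The clipped group gradients before and after adding the record can point in opposite directions, each of norm exactly $C$, so the bound $2C$ is attained as a supremum over datasets and parameters.

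The main obstacle is justifying the coupling of the random partition. In the actual algorithm, groups are formed by a random partition of the sampled batch, and adding one record could in principle reshuffle all group memberships. That would break the claim that only one group changes. We would handle this in one of two ways. The first is to define the partition by a fixed assignment of the common records that is independent of the extra one, with the extra record inserted into a single group. The second is to argue that the mechanism's output distribution is a mixture over partitions, where for each fixed partition of the common records the sensitivity is $2C$. By joint convexity of the privacy loss (or by analyzing each conditional Gaussian mechanism and using that the mixture weights match), the guarantee transfers. We would state this coupling explicitly as the neighboring-relation convention, as in the prior works cited in~\Cref{sec:method}, before carrying out the two-case triangle-inequality argument above.
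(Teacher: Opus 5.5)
Your proposal is correct and follows essentially the same route as the paper. The paper also fixes the grouping so that all common pairs land in identical groups on both sides, then splits into two cases: the extra pair forms a new singleton group (difference at most $C$), or it joins an existing group (difference at most $2C$ by the triangle inequality). It handles the randomized grouping by conditioning on the grouping randomness and integrating the conditional $(\varepsilon,\delta)$ inequality, which is your second option. Like the paper, you rigorously establish only the upper bound $2C$ (your attainment remark is heuristic), and that bound is all that is needed to calibrate the Gaussian noise.
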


We provide the proof of \Cref{the:group_grad} in Appendix~\ref{apsec:MissingProofs}. By bounding the group-level contribution from the negative samples, the sensitivity of computing the gradients becomes only related to the clipping norm. Therefore, the group size $S$ can be tuned freely to achieve the best trade-off between sample-level and batch-level contribution bounding, without incurring excessive sensitivity. Our experiments consistently show that the two extreme choices ($S=1$ and $S=B$) yield significantly worse performance than \toolname.

\begin{algorithm}[!t]
       \caption{\toolname Workflow}
      \label{alg:method}
       \SetKwInOut{Input}{Input}
      \SetKwInOut{Output}{Output}
      \SetKwProg{Fn}{Function}{:}{}
    \Input{Encoder $\mathcal{E}$ parameterized with $\theta$, sensitive dataset $\mathcal{D}$, group size $S$, data augmentation $\mathcal{U}$, augmentation times $N_a$, training iterations $T$, learning rate $\lambda$, sampling ratio $q$, clipping norm $C$, noise scale $\sigma$.}
    \Output{Trained encoder $\mathcal{E}$}
    $t = 0$\\
    \While{$t < T$}{
    Sample a subset $\mathcal{B}=\{(x_i,x_i^+)\}_{i=1}^B$ from $\mathcal{D}$\\
    $K = \lceil B/S \rceil$\\
    Generate $K$ disjoint sample indices \( \mathcal{G}_1, \dots, \mathcal{G}_K \)\\
    \tcp{Group-level InfoNCE Loss}
    \For{$k \gets 1, \ldots, K$ }{
        Compute similarity $s$ (Eq.~\ref{eq:sim})\\
        Compute similarity $s^\text{aug}$ (Eq.~\ref{eq:sim_aug})\\
        $\mathcal{L}_k = \sum_{i \in \mathcal{G}_k}
            -\log \frac{e^{{s_{i,i}}/{\tau}}}{\sum\limits_{j \in \mathcal{G}_k} \Bigl(e^{{s_{i,j}}/{\tau}} + \sum\limits_{m=1}^{N_a} e^{{s_{i,j}^\text{aug}}/{\tau}}\Bigr)}$\label{line:loss}
        }
    \tcp{Group Clipping}
    $\Delta\theta = \sum\limits_{k=1}^{K} { \text{Clip}_C\left(\nabla_\theta \mathcal{L}_k\right)}$\\
    $\Delta\tilde{\theta} = \frac{1}{K}(\Delta\theta + 2Cn)$, $n \sim \mathcal{N}(0, \sigma^2 \mathbb{I})$\\
    $\theta = \theta - \lambda \Delta\tilde{\theta}$\label{line:dpsgd}\\
    $t = t + 1$\\
    }
     \KwRet: $\mathcal{E}$
\end{algorithm}

\subsection{Intra-Group Sample Augmentation (ISA)} 
\label{subsec:negative_augmentation}

Our method reduces the number of negative samples visible to each sample in order to decouple the gradient computation of each term from its dependence on all other samples. However, it is well known that the quantity (or diversity) of negative samples is a critical factor for the effectiveness of contrastive learning~\cite{ContrastiveLearning,DBLP:conf/nips/ChenZXCD0TZC22}. To mitigate this issue, we introduce data augmentation within each group to increase the number of negative samples. 

To formalize this approach, we first define $\mathcal{U}$ as a non-deterministic augmentation algorithm that transforms an input data into a different one, while preserving its semantics. For example, we can randomly replace words in a sentence with their synonyms, and the resulting sentence still retains a similar semantic. Following the similarity definition in~\Cref{eq:sim}, the similarity between $x_i$ and the augmented version of its negative sample $x_j^+$ is, 
\begin{equation}
\label{eq:sim_aug}
    s_{i,j}^\text{aug} = \frac{\mathcal{E}(x_i)^\top\mathcal{E}(x_j^{+,\text{aug}})}{\|\mathcal{E}(x_i)\|_2 \|\mathcal{E}(x_j^{+,\text{aug}})\|_2},
\end{equation}
where $x_j^{+,\text{aug}} = \mathcal{U}(x_j^+)$, and $\mathcal{E}$ is the embedding model. 
Since $\mathcal{U}$ is non-deterministic,~\Cref{eq:sim_aug} computes the similarity between $x_i$
and a different negative sample each time, implicitly increasing the number of negative samples. Then, we modify the group-level InfoNCE loss function with augmentation-aimed similarity as,
{\begin{equation}
\label{eq:simclr_group_aug}
    \mathcal{L}_\text{NCE}^\text{group+aug} = \sum_{k=1}^K \sum_{i \in \mathcal{G}_k}-\log \frac{e^{{s_{i,i}}/{\tau}}}{\sum\limits_{j \in \mathcal{G}_k} \Bigl(e^{{s_{i,j}}/{\tau}} + \sum\limits_{m=1}^{N_a} e^{{s_{i,j}^\text{aug}}/{\tau}}\Bigr)},
\end{equation}}
where $N_a$ represents how many times each negative sample is augmented. Each sample originally sees only $S-1$ negative samples in~\Cref{eq:simclr_group},  whereas~\Cref{eq:simclr_group_aug} increases this number to $(N_a+1) (S-1)$. 
The final gradient computing $g_\text{group}^{\text{aug}}$ is,
\begin{equation}
\label{eq:f_group_aug}
 \sum_{k=1}^{K} 
    \text{Clip}_C\Biggl(
        \nabla_\theta 
             \sum_{i \in \mathcal{G}_k}
            -\log \frac{e^{s_{i,i} / {\tau}}}{\sum\limits_{j \in \mathcal{G}_k} \Bigl(e^{{s_{i,j}}/{\tau}} + \sum\limits_{m=1}^{N_a} e^{{s_{i,j}^\text{aug}}/{\tau}}\Bigr)}
    \Biggr).
\end{equation}
Since for each group, the new negative samples are transformed from the samples in the same group, $g_\text{group}^{\text{aug}}$ has the same sensitivity as $g_\text{group}$, and we have the following theorem. We provide the proof of \Cref{the:group_grad_aug} in Appendix~\ref{apsec:MissingProofs}. 

\begin{theorem}
\label{the:group_grad_aug}
    For any two neighboring datasets $\mathcal{D}$ and $\mathcal{D}'$, the $g_\text{group}^{\text{aug}}$ has a bounded global sensitivity $\Delta_{g_\text{group}^{\text{aug}}},$ which is defined as $ \max_{\mathcal{D} \sim \mathcal{D}'} \left| g_\text{group}^{\text{aug}}(\mathcal{D}) - g_\text{group}^{\text{aug}}(\mathcal{D}') \right|_2 = 2C$.
\end{theorem}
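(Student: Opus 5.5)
The argument mirrors the one for \Cref{the:group_grad}; the only new ingredient is that the augmented negatives are generated from samples inside the same group. First I will fix the coupling between $\mathcal{D}$ and $\mathcal{D}'$. Take the add/remove case in which $\mathcal{D}'=\mathcal{D}\cup\{(x^\ast,x^{\ast,+})\}$. Condition on the sampled batch and the random partition, and couple the two executions so that every pair other than $(x^\ast,x^{\ast,+})$ is sampled identically and lands in the same group. Also couple the internal randomness of the augmentation $\mathcal{U}$, so that each $x_j^{+,\text{aug}}$ for $j\neq\ast$ is the same draw in both runs. Under this coupling, if the extra record is not sampled the two gradients coincide. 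Otherwise it lies in exactly one group, say $\mathcal{G}_{k^\ast}$, which becomes $\mathcal{G}_{k^\ast}\cup\{\ast\}$ in $\mathcal{D}'$. Ordering or capacity effects, such as a group overflowing size $S$ or $K$ changing, can be absorbed by letting the added sample either join an existing group or form a new singleton group; both options give the same conclusion.

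Next I will show that for every $k\neq k^\ast$ the group loss $\mathcal{L}_k$ is identical in the two runs. The numerator $e^{s_{i,i}/\tau}$ and every denominator term $e^{s_{i,j}/\tau}$ and $e^{s^\text{aug}_{i,j}/\tau}$ in \Cref{eq:simclr_group_aug} involve only indices $i,j\in\mathcal{G}_k$. The augmented views $x_j^{+,\text{aug}}=\mathcal{U}(x_j^+)$ are also functions of samples in $\mathcal{G}_k$ and of coupled randomness. Hence $\nabla_\theta\mathcal{L}_k$, and so its clipped version, is unchanged. This is the key structural observation: augmentation does not create any cross-group dependency.

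Then
\[
\bigl\|g_\text{group}^{\text{aug}}(\mathcal{D})-g_\text{group}^{\text{aug}}(\mathcal{D}')\bigr\|_2=\bigl\|\text{Clip}_C(\nabla_\theta\mathcal{L}_{k^\ast})-\text{Clip}_C(\nabla_\theta\mathcal{L}'_{k^\ast})\bigr\|_2\le 2C
\]
by the triangle inequality. In the singleton-group variant the bound is even $C$, since one of the two terms is zero. The remove case is symmetric.

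I expect the main obstacle to be stating the equality $=2C$ and the coupling cleanly, because sensitivity here is really a statement about the conditional mechanism given the shared randomness. I would first handle this exactly as in \Cref{the:group_grad}: treat the partition and the augmentation seeds as public randomness that is independent of the data. I would then attain the bound $2C$ with a worst case in which the two clipped group gradients have norm $C$ and point in opposite directions, noting that the upper bound $2C$ is all that the Gaussian mechanism in \Cref{alg:method} needs.
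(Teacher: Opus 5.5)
Your proposal is correct and follows the paper's proof. Like you, the paper fixes the grouping so that every pair other than $(x_*,x_*^+)$ lands in the same group in both datasets. It bounds the difference by $C$ when the added pair forms a new singleton group, and by $2C$ via the triangle inequality when it joins an existing group. It then remarks that the computation in $g_{\text{group}}^{\text{aug}}$ is confined to each group, so the same analysis applies, and passes from fixed to randomized grouping by averaging the conditional DP inequality over the shared randomness.

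Your explicit coupling of the augmentation seeds fills in a detail the paper leaves implicit. Exactly as in the paper, though, you stipulate rather than derive that a partition coupling exists in which only the added pair's placement changes. A uniform permute-and-chunk partition with capacity $S$ does not admit one in general, e.g.\ when $S$ divides $B$. This should therefore be stated as a requirement on how the groups are formed.
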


\noindent \Cref{alg:method} elaborates on the workflow of \toolname. At each iteration, \toolname first samples a mini-batch data $\mathcal{B}$ from the original dataset $\mathcal{D}$ using Poisson sub-sampling, and then divides it into \(\lceil B/S \rceil\) sub-groups, each of which has at most $S$ samples. It then computes a group-level  InfoNCE loss function for each sample with data augmentation, as defined in~\Cref{eq:f_group_aug}. Gradients from each group are clipped individually, aggregated, and perturbed with Gaussian noise for DP guarantee. Finally, the model parameters are updated using the noisy, group-aggregated gradient.

\subsection{Extension} 
\label{subsec:extension}

\subsubsection{Dual Modality}
Since \toolname preserves the basic structure of the original InfoNCE loss functions, it can be easily extended to multi-modal contrastive learning. This section considers dual modality. In such a scenario, each data pair $(x_i, x_i^+) = (x_i^1 \in \mathbb{R}^{d_x^1}, x_i^2 \in \mathbb{R}^{d_x^2})$, where $x_i^1$ and  $x_i^2$ are data from different modalities (e.g., an image and its text description), should have similar features. $x_i^2$ is the positive sample of $x_i^1$, and is the negative sample of other images. Our objective is training two embedding models $\mathcal{E}^1 : x \in \mathbb{R}^{d_x^1} \rightarrow z \in \mathbb{R}^{d_z}$ parameterized with $\theta^1$ and $\mathcal{E}^2 : x \in \mathbb{R}^{d_x^2} \rightarrow z \in \mathbb{R}^{d_z}$ parameterized with $\theta^2$ for the first and second modalities, respectively. The cross-modality similarity is, 
\begin{equation}
\label{eq:sim_daul}
    s_{i,j}^\text{dual} = \frac{\mathcal{E}^1(x_i^1)^\top\mathcal{E}^2(x_j^2)}{\|\mathcal{E}^1(x_i^1)\|_2 \|\mathcal{E}^2(x_j^2)\|_2}.
\end{equation}
Then, the group-level InfoNCE loss function for dual modalities is,
\begin{equation}
\label{eq:simclr_group_dual}{
\begin{aligned}
\mathcal{L}_\text{NCE}^\text{group\_dual} = \sum_{k=1}^K  \Biggl(&\sum_{i \in \mathcal{G}_k}-\log \frac{e^{s_{i,i}^\text{dual} / \tau}}{\sum_{j\in \mathcal{G}_k} e^{s_{i,j}^\text{dual} / \tau}} + \\
& \sum\limits_{j \in \mathcal{G}_k}-\log \frac{e^{s_{j,j}^\text{dual} / \tau}}{\sum_{i\in \mathcal{G}_k} e^{s_{i,j}^\text{dual} / \tau}}\Biggr)
\end{aligned},
}
\end{equation}
where the first term compares each image with other texts, while the second term compares each text with other images. The group-level gradient $g_\text{group}^\text{dual}$ is,
\begin{equation}
\label{eq:f_group_dual}{
\begin{aligned}
 \sum_{k=1}^K \text{Clip}_C \Biggl(&\nabla_{\theta^1,\theta^2} \sum_{i \in \mathcal{G}_k}-\log \frac{e^{s_{i,i}^\text{dual} / \tau}}{\sum_{j\in \mathcal{G}_k} e^{s_{i,j}^\text{dual} / \tau}} + \\
& \nabla_{\theta^1,\theta^2} \sum_{j \in \mathcal{G}_k}-\log \frac{e^{s_{j,j}^\text{dual} / \tau}}{\sum_{i\in \mathcal{G}_k} e^{s_{i,j}^\text{dual} / \tau}}\Biggr)
\end{aligned}.
}
\end{equation}
We can increase the number of negative samples following the same process in~\Cref{subsec:negative_augmentation}, and use the gradient with noise added to update $\mathcal{E}^1$ and $\mathcal{E}^2$.

\subsubsection{Optimizers}
While~\Cref{alg:method} uses SGD as the gradient step, the model update in Line~\ref{line:dpsgd} 
can be passed to other gradient-based optimizers such as Adam~\cite{adam} and AdamW~\cite{adamw}, without requiring any modification to the overall algorithm structure.


\subsection{Privacy Analysis}
\label{subsec:dp_analysis}

The privacy analysis of \toolname closely follows the framework of standard DP-SGD~\cite{dpsgd}, and we can use any privacy accountant to compute its privacy loss, such as Rényi differential privacy (RDP)~\cite{rdp} or Privacy Loss Random Variables (PRV)~\cite{prv}. 

To illustrate the analysis, we present a concrete instantiation using the RDP accountant. Specifically, each training step in~\Cref{alg:method} applies a sub-sampled Gaussian mechanism, which satisfies $(\alpha, \rho(\alpha))$-RDP for all orders $\alpha > 1$, where $\rho(\alpha)$ depends on the noise scale $\sigma$ and sub-sampling ratio $q$. By the composition property of RDP, the total privacy cost after $T$ iterations is $(\alpha, T \cdot \rho(\alpha))$-RDP. Finally, we convert this bound to $(\epsilon, \delta)$-DP via the standard RDP-to-DP conversion~\cite{rdp}, enabling straightforward calibration of $\sigma$ to meet a prescribed privacy budget. For instance, given target privacy parameters $(\epsilon, \delta) = (1.0, 10^{-6})$, we fix $T = 100$ and $q = 0.01$, then numerically search for the minimal noise scale $\sigma$ such that the converted $\epsilon$ is below the target. This two-step procedure: (1) expressing total privacy loss as a function of $\sigma$, and (2) selecting $\sigma$ to satisfy the desired $(\epsilon, \delta)$ ensures \toolname meets rigorous, provable privacy guarantees while enabling practical utility-privacy trade-offs.

\section{Experimental Setup}
\label{sec:exp_setup}

\noindent \textbf{Evaluation Tasks.} We evaluate \toolname on both \textit{uni-modal} and \textit{multi-modal} settings to assess different aspects of the learned representations of contrastive learning.

\begin{itemize}[leftmargin=*]
    \item \textit{Uni-modal evaluation (image-only)}: We train an image encoder on image datasets through DP contrastive learning. We use two standard protocols to probe the quality of image features extracted by the trained encoder: (1) Linear probing, where we freeze the weights of trained image encoder and train a linear classifier that maps its fixed visual features to class labels for image classification, and (2) $k$-NN classification, which directly predicts labels by comparing test features to training features using cosine similarity. Both methods reveal how well the model captures semantic structure from images.
    \item \textit{Multi-modal evaluation (image–text)}: We train an image encoder and text encoder on image-text datasets through DP contrastive learning, and assess their performance on image-to-text and text-to-image retrieval tasks. For instance, given a sentence, the model retrieves the most relevant image from a dataset by comparing the similarity of their features, and vice versa.
\end{itemize}

These evaluation protocols are widely used and well-established in previous contrastive learning~\cite{CLIP,simclr}, providing reliable and comparable measures of model performance.

\noindent \textbf{Investigated Datasets.} For uni-modal evaluation, we perform experiments on four image datasets following previous studies~\cite{privimage,dp-feta,gong2026easy}, {Fashion-MNIST} ({F-MNIST})~\cite{fmnist},  CIFAR-10~\cite{cifar10},  EuroSAT~\cite{eurosat}, and Camelyon~\cite{camelyon1}.  F-MNIST comprises 60,000 images of 10 different products.  CIFAR-10 is composed of 60,000 natural images across 10 classes. Compared to  F-MNIST and CIFAR-10,  EuroSAT and Camelyon are more ``sensitive'' image datasets. EuroSAT comprises diverse satellite images, capturing a broad spectrum of geographical and environmental scenarios across Europe.  Camelyon comprises 455,954 histopathological image patches of human tissue, and all images are labeled whether at least one pixel has been identified as a tumor cell.

\begin{table*}[!t]
\renewcommand{\arraystretch}{1.1}
\setlength{\tabcolsep}{5.5pt}
    \centering
    \caption{Classification accuracy of Linear Probing and $k$-NN (\(k=3)\).}
    \label{tab:main_cls}
    \resizebox{1.0\textwidth}{!}{
    \begin{tabular}{l|cc|cc|cc|cc|cc|cc|cc|cc}
    \toprule
    \multirow{3}{*}{\textbf{Method}} & \multicolumn{8}{c|}{$\epsilon=1$} & \multicolumn{8}{c}{$\epsilon=10$}\\
    \Xcline{2-17}{0.5pt}
    & \multicolumn{2}{c|}{{F-MNIST}} & \multicolumn{2}{c|}{{CIFAR-10}} & \multicolumn{2}{c|}{{EuroSAT}} & \multicolumn{2}{c|}{{Camelyon}} & \multicolumn{2}{c|}{{F-MNIST}} & \multicolumn{2}{c|}{{CIFAR-10}} & \multicolumn{2}{c|}{{EuroSAT}} & \multicolumn{2}{c}{{Camelyon}} \\
    \Xcline{2-17}{0.5pt}
     & \centering Linear & \(k\)-NN & Linear & \(k\)-NN & Linear & \(k\)-NN & Linear & \(k\)-NN & Linear & \(k\)-NN & Linear & \(k\)-NN & Linear & \(k\)-NN & Linear & \(k\)-NN\\
    \hline
    Base & 76.0 & 77.4 & 30.6 & 24.5 & 50.5 & 33.4 & 69.7 & 52.4 & 76.0 & 77.4 & 30.6 & 24.5 & 50.5 & 33.4 & 69.7 & 52.4 \\
    DP-SGD~\cite{dpsgd} & 77.5 & 79.4 & 32.0 & 25.6 & 48.2 & 35.6 & 69.6 & 54.8 & 77.5 & 78.8 & 33.6 & 28.5 & 50.1 & 36.1 & 68.8 & 54.3 \\
    Logit-DP~\cite{pair-level} & 78.2 & 79.6 & 31.8 & 25.4 & 48.2 & 34.5 & 68.3 & 57.6 & 78.4 & 79.7 & 32.1 & 25.3 & 51.4 & 34.3 & 65.7 & 54.6 \\
    DP-CLIP~\cite{batch-level} & 78.3 & 80.0 & 33.4 & 28.1 & 50.6 & 36.5 & 72.0 & 66.8 & 79.8 & 81.1 & 35.1 & 28.9 & 52.3 & 38.2 & 73.5 & 69.9 \\
    \hline
    \rowcolor{gray0} \toolname(ours) & 81.5 & 82.1 & 37.6 & 33.8 & 52.3 & 40.2 & 77.5 & 73.5 & 82.9 & 82.8 & 42.3 & 39.5 & 59.1 & 53.3 & 78.4 & 76.7 \\
    \bottomrule
\end{tabular}
}
\end{table*}

\begin{table}[!t]
\renewcommand{\arraystretch}{1.1}
\setlength{\tabcolsep}{3.5pt}
    \centering
    \caption{Classification accuracy of Linear Probing and $k$-NN (\(k=3)\) under transfer learning settings, where ImageNet is used for pre-training.}
    \label{tab:main_cls_transfer}
    \resizebox{0.48\textwidth}{!}{
    \begin{tabular}{l|cc|cc|cc|cc}
    \toprule
    \multirow{3}{*}{\textbf{Method}} & \multicolumn{4}{c|}{$\epsilon=1$} & \multicolumn{4}{c}{$\epsilon=10$}\\
    \Xcline{2-9}{0.5pt}
    & \multicolumn{2}{c|}{{F-MNIST}} & \multicolumn{2}{c|}{{CIFAR-10}} & \multicolumn{2}{c|}{{F-MNIST}} & \multicolumn{2}{c}{{CIFAR-10}} \\
    \Xcline{2-9}{0.5pt}
     & \centering Linear & \(k\)-NN & Linear & \(k\)-NN & Linear & \(k\)-NN & Linear & \(k\)-NN \\
    \hline
    Base & 76.0 & 77.4 & 30.6 & 24.5 & 76.0 & 77.4 & 30.6 & 24.5 \\
    DP-SGD~\cite{dpsgd} & 76.9 & 77.7 & 31.1 & 25.5 & 76.7 & 78.1 & 32.2 & 26.4\\
    Logit-DP~\cite{pair-level} & 77.0 & 78.5 & 31.3 & 26.6 & 79.5 & 81.5 & 34.0 & 30.0\\
    DP-CLIP~\cite{batch-level} & 76.7 & 78.0 & 31.1 & 26.6 & 77.1 & 77.7 & 31.2 & 26.9 \\
    \hline
    \rowcolor{gray0} \toolname(ours) & 79.5 & 81.0 & 34.5 & 32.1 & 82.3 & 82.9 & 40.2 & 38.2 \\
    \bottomrule
\end{tabular}
}
\end{table}

\begin{table*}[!t]
\renewcommand{\arraystretch}{1.1}
\setlength{\tabcolsep}{6.8pt}
    \centering
    \caption{Retrieval accuracy of image-to-text (I$\rightarrow$T) and text-to-image (T$\rightarrow$I) retrieval.}
    \label{tab:main_retrieval}
    \resizebox{1.0\textwidth}{!}{
    \begin{tabular}{l|rc|rc|rc|rc|rc|rc|rc|rc}
    \toprule
    \multirow{3}{*}{\textbf{Method}} & \multicolumn{8}{c|}{$\epsilon=1$} & \multicolumn{8}{c}{$\epsilon=10$}\\
    \Xcline{2-17}{0.5pt}
    & \multicolumn{2}{c|}{{PEDES}} & \multicolumn{2}{c|}{{RSTPReid}} & \multicolumn{2}{c|}{{Fashion}} & \multicolumn{2}{c|}{{Roco}} & \multicolumn{2}{c|}{{PEDES}} & \multicolumn{2}{c|}{{RSTPReid}} & \multicolumn{2}{c|}{{Fashion}} & \multicolumn{2}{c}{{Roco}} \\
    \Xcline{2-17}{0.5pt}
     & \centering I$\rightarrow$T & T$\rightarrow$I & I$\rightarrow$T & T$\rightarrow$I & I$\rightarrow$T & T$\rightarrow$I & I$\rightarrow$T & T$\rightarrow$I & I$\rightarrow$T & T$\rightarrow$I & I$\rightarrow$T & T$\rightarrow$I & I$\rightarrow$T & T$\rightarrow$I & I$\rightarrow$T & T$\rightarrow$I\\
    \hline
    Base & 15.5 & 10.7 & 15.3 & 11.7 & 3.5 & 2.4 & 20.5 & 20.2 & 15.5 & 10.7 & 15.3 & 11.7 & 3.5 & 2.4 & 20.5 & 20.2 \\
    DP-SGD~\cite{dpsgd} & 14.5 & 10.4 & 13.4 & 11.0 & 2.9 & 2.3 & 18.7 & 17.9 & 14.2 & 10.7 & 14.2 & 11.1 & 3.5 & 2.5 & 20.4 & 18.8 \\
    Logit-DP~\cite{pair-level} & 16.1 & 11.7 & 15.4 & 12.6 & 3.4 & 2.7 & 18.7 & 18.2 & 16.4 & 12.4 & 16.6 & 12.6 & 3.5 & 3.1 & 19.7 & 19.5 \\
    DP-CLIP~\cite{batch-level} & 15.1 & 11.7 & 15.4 & 11.8 & 3.1 & 2.4 & 18.4 & 18.0 & 18.5 & 13.3 & 17.2 & 14.4 & 3.6 & 3.0 & 19.4 & 20.5 \\
    \hline
    \rowcolor{gray0} \toolname(ours) & 55.8 & 40.0 & 25.9 & 21.4 & 11.9 & 9.5 & 23.2 & 25.3 & 65.9 & 48.1 & 38.0 & 34.5 & 26.3 & 24.7 & 37.5 & 38.4 \\
    \bottomrule
\end{tabular}
}
\end{table*}

For multi-modal evaluation, we assess \toolname on four datasets, CUHK-PEDES~\cite{PEDES}, RSTPReid~\cite{RSTPReid}, Fashion~\cite{Fashion}, and ROCO~\cite{ROCO}.  CUHK-PEDES~\cite{PEDES} and RSTPReid~\cite{RSTPReid} are person re-identification datasets where the goal is to retrieve person images using natural language descriptions of appearance.
{Fashion}~\cite{Fashion} focuses on fashion retrieval, aligning clothing images with detailed textual descriptions of style, category, and attributes. {ROCO}~\cite{ROCO} targets the medical domain, pairing radiology images with figure captions from scientific publications for cross-modal retrieval.
These datasets are widely used in multi-modal representation learning and provide a robust evaluation of retrieval performance across general, fashion, and clinical scenarios~\cite{multi-data1,multi-data2}.~\Cref{fig:multimodal_2x2} shows the examples for four multi-modal datasets. ~\Cref{tab:datainfo} shows that all datasets are divided into a training set, a validation set, and a test set.

\noindent \textbf{Baselines.} As introduced in~\Cref{subsec:dpcl}, we compare \toolname with three representative methods, DP-SGD~\cite{dpsgd}, DP-CLIP~\cite{batch-level}, and Logit-DP~\cite{pair-level} (detailed in Appendix~\ref{ap:subsec:logitdp}). Additionally, to demonstrate the effectiveness of DP contrastive learning, we include a baseline named ``Base,'' where the embedding model is not trained on the private dataset and is instead used directly for the downstream task. For uni-modality, it is a randomly initialized network. For multi-modality, it is a pre-trained model as introduced below.

\noindent \textbf{Implementation.} All DP contrastive learning methods are realized with Python 3.9 on a server with 1 NVIDIA GeForce A6000 and 512GB of memory. All methods use Opacus~\cite{opacus} to train encoders with DP-Adam as the DP optimizer, and use RDP as the privacy accountant. Referring to previous work~\cite{gong2025dpimagebench}, $\delta$ is set to $1/(N\log N)$, where $N$ is the number of sensitive data. The embedding models are ResNet-18 and ViT-B32 for uni-modal and multi-modal evaluation, respectively. ResNet-18 is trained from scratch. We use the pre-trained ViT-B32~\cite{CLIP}, and fine-tune the query and value layers of its attention modules 
using LoRA with rank=16. For \toolname, we divide each data batch into multiple sub-groups with the group size $S=16$. We use random cropping 80\% and random sentence swapping for image and text augmentation, respectively. The examples of text augmentation for four multi-modal datasets are present in~\Cref{tab:aug_example}. The augmentation time $N_a$ is set to 1.

\noindent \textbf{Evaluation Metrics.} For uni-modal evaluation, we use classification accuracy. For the multi-modal task, we use top-$K$ retrieval accuracy. Specifically, given a query from one modality (e.g., an image), we rank all items in the other modality (e.g., texts) by their embedding similarity and check whether the ground-truth match is within the top-$K$ retrieved results. We evaluate both image-to-text and text-to-image directions, and set $K$ to 10. For uni-modal training, we report the final test results at the end of training for each method. For multi-modal training, we find that the baseline training is highly unstable, while ours is stable as shown in~\Cref{fig:acc_change_multi} and~\Cref{fig:loss_change_multi}. To reduce the computational cost of hyper-parameter tuning, we report the best test result observed during training for baselines. 
Please refer to Appendix~\ref{apsec:id} for more implementation details.

\section{Results Analysis}
\label{sec:results_analysis}

This section evaluates the effectiveness of \toolname by answering four Research Questions (RQs),
\begin{itemize}[leftmargin=*]
    \item \textbf{RQ1.} Does \toolname outperform the three baseline methods across the two investigated contrastive learning tasks?
    \item \textbf{RQ2.} How does \toolname improve the privacy-utility tradeoff in DP contrastive learning?
    \item \textbf{RQ3.} Does \toolname generalize to various contrastive training strategies?
    \item \textbf{RQ4.} How do the hyper-parameters of \toolname affect the performance of contrastive learning?
\end{itemize}


\begin{figure*}[!t]
    \centering
    \begin{minipage}[b]{0.48\textwidth}
        \centering
        \includegraphics[width=\textwidth]{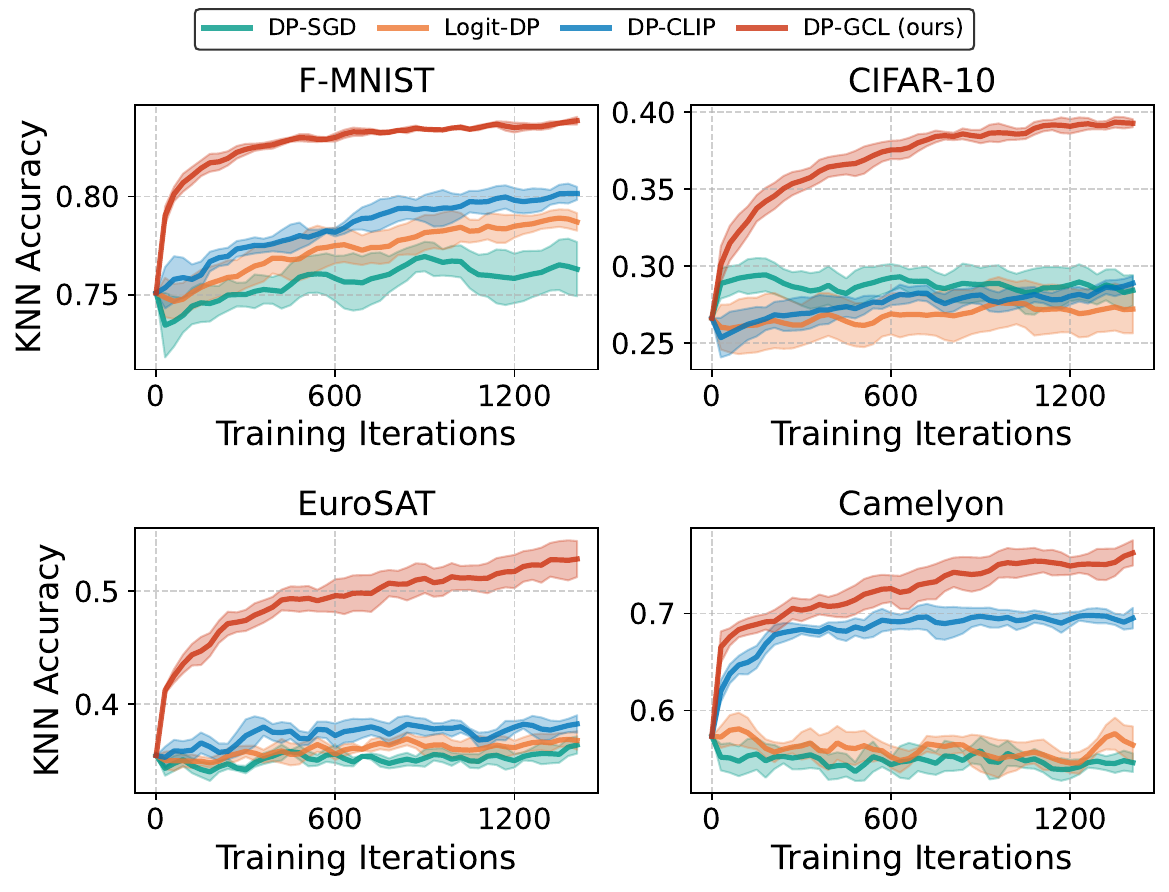}
        \captionof{subfigure}{Uni-modal Evaluation}
        \label{fig:acc_change_uni}
    \end{minipage}%
    \hfill
    \begin{minipage}[b]{0.48\textwidth}
        \centering
        \includegraphics[width=\textwidth]{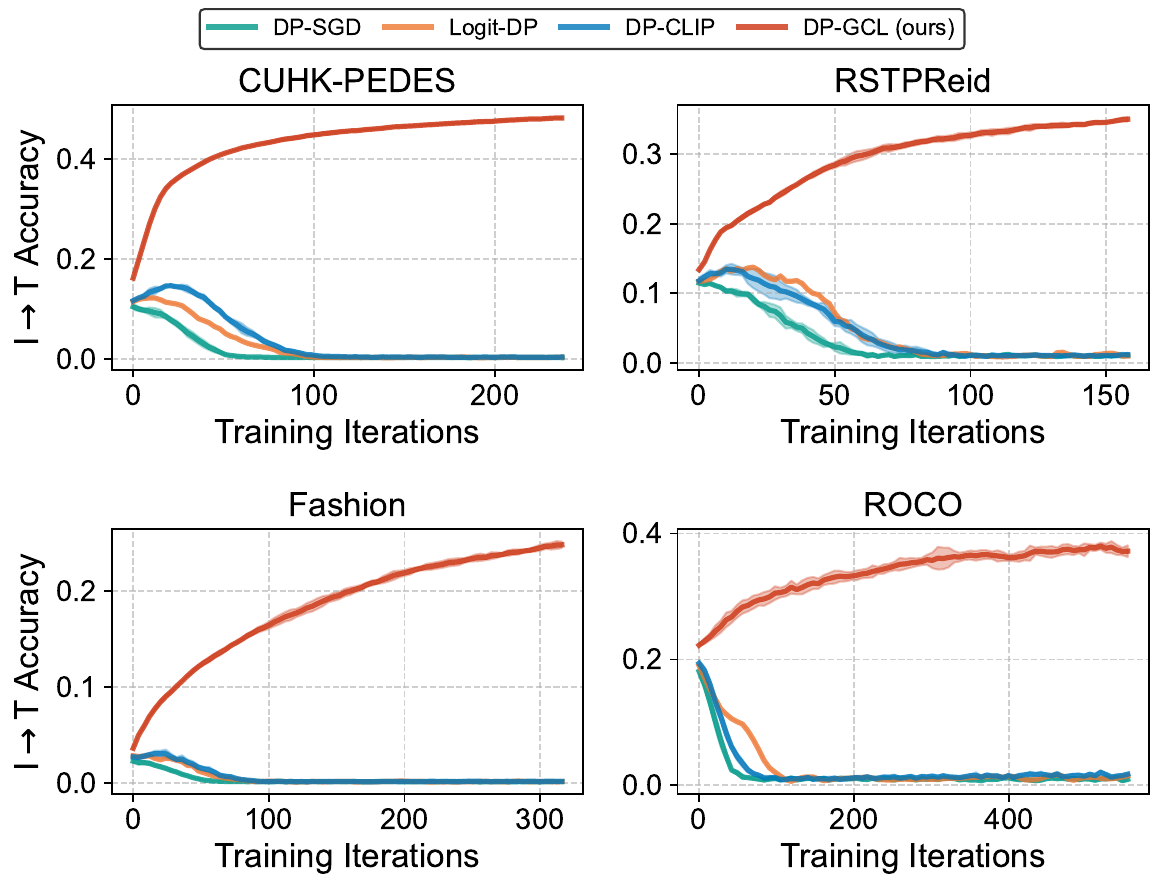}
        \captionof{subfigure}{Multi-modal Evaluation}
        \label{fig:acc_change_multi}
    \end{minipage}
    \caption{(a) \(k\)-NN classification accuracy and (b) I$\rightarrow$T retrieval accuracy of three baselines and \toolname during training.}
    \label{fig:acc_change}
\end{figure*}

We also present the computational consumption of different methods in Appendix~\ref{subsec:resource}, the performance of \toolname defends against membership inference attacks in Appendix~\ref{subsec:mia}, and compare \toolname with the non-private baseline in Appendix~\ref{subsec:non-private}.

\subsection{RQ1: Does \toolname outperform the three baseline methods?}
\label{subsec:rq1}

\subsubsection{Utility of Features} For uni-modal evaluation,~\Cref{tab:main_cls} reports linear probing and \(k\)-NN (k=3) accuracy across four datasets under privacy budgets $\epsilon=\{1,10\}$. The “Base” baseline (without contrastive learning) 
performs poorly, especially on complex datasets like {Camelyon} (52.4\% \(k\)-NN accuracy), highlighting the necessity of privacy-aware representation learning. Existing DP methods (DP-SGD, Logit-DP, DP-CLIP) improve over this baseline, but \toolname consistently achieves the best results, with particularly notable gains on {EuroSAT} (e.g., +15.1\% \(k\)-NN accuracy over DP-CLIP at $\epsilon=10$). Performance improves with higher privacy budgets. Furthermore, \toolname narrows the gap between \(k\)-NN and linear probing, indicating that our approach yields more robust and versatile feature representations. 

We further assess the generalization capability of \toolname under transfer learning settings. Specifically, we train the image encoder on ImageNet~\cite{imagenet} via DP contrastive learning, and evaluate the encoder on other datasets. In this setting, the training images and test images could have totally different data distributions.   Table~\ref{tab:main_cls_transfer} reports the classification accuracy on downstream tasks (F-MNIST and CIFAR-10) using ImageNet~\cite{imagenet} pre-trained representations. The results on EuroSAT and Camelyon are put in Appendix~\ref{subsec:transfer_app}. This setup tests the robustness of the learned features when transferred to distinct domains under privacy constraints. As shown, \toolname consistently outperforms all DP baselines. These results validate that \toolname learns high-quality, transferable representations.


For multi-modal evaluation,~\Cref{tab:main_retrieval} presents image-to-text (I→T) and text-to-image (T→I) retrieval accuracy across four datasets under privacy budgets $\epsilon=\{1,10\}$. The “Base” baseline achieves modest performance, 
while standard DP adaptations like DP-SGD often degrade retrieval accuracy, likely due to excessive noise disrupting alignment between modalities. Among existing DP methods, Logit-DP and DP-CLIP show marginal improvements, but their performance remains limited, especially on fine-grained datasets like {Fashion} ($\leq$3.6\% I→T at $\epsilon=10$). In stark contrast, \toolname achieves substantial gains across all settings. At $\epsilon=1$, \toolname improves I→T accuracy over DP-CLIP by +40.7\% on {PEDES}  and +10.5\% on {RSTPReid}. 
Under $\epsilon=10$, gains are even more pronounced: 65.9\% I→T on {PEDES} and 26.3\% on Fashion, which substantially outperform prior methods by a wide margin. 


\begin{figure*}
    \centering
    \includegraphics[width=1.0\linewidth]{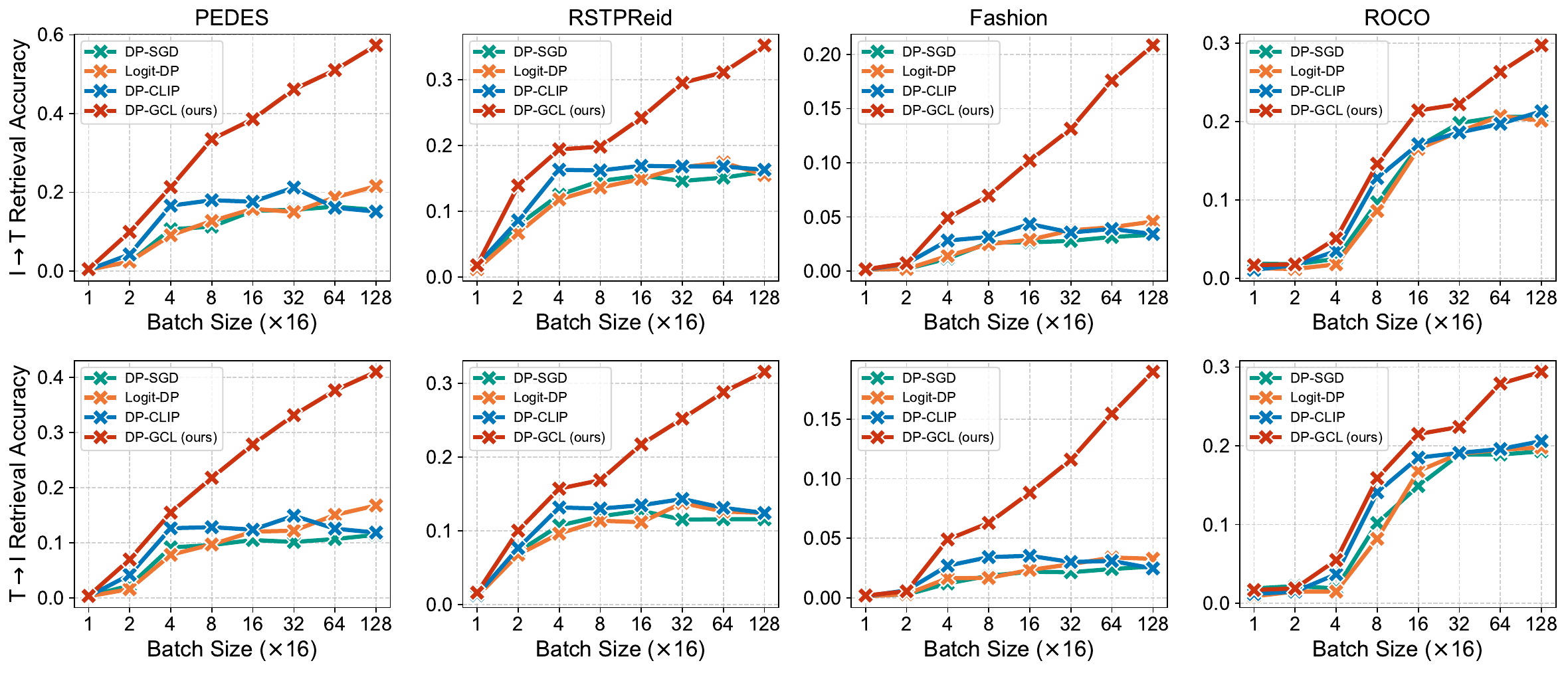}
    \caption{Image-to-text (I$\rightarrow$T) and text-to-image (T$\rightarrow$I) retrieval accuracy of three baselines and \toolname under different batch sizes $\{1,2,4,8,16,32,64,128\} \times 16$.}
    \label{fig:batch_size_scaling}
\end{figure*}

\subsubsection{Convergence Efficiency}~\Cref{fig:acc_change} shows the accuracy on test sets over training iterations for (a) uni-modal and (b) multi-modal tasks under $\epsilon=10$. \toolname achieves faster and more stable convergence compared to DP-SGD, Logit-DP, and DP-CLIP. We also provide the curve of training loss in Appendix~\ref{ap:subsec:convergence}.

\Cref{fig:acc_change_uni} shows that our \toolname consistently achieves higher accuracy and faster convergence than the baselines, which demonstrates its effectiveness in preserving utility under privacy constraints. Notably, on challenging datasets such as {EuroSAT} and {Camelyon}, the \(k\)-NN accuracy of DP-SGD and Logit-DP barely improves during training, whereas \toolname steadily increases and converges to a significantly higher accuracy.

In the more challenging multi-modal task (\Cref{fig:acc_change_multi}), our method consistently outperforms all baselines across all four datasets. While DP-SGD, Logit-DP, and DP-CLIP struggle to maintain meaningful performance, often collapsing to near-zero accuracy, \toolname achieves stable and substantially higher I→T accuracy. This demonstrates its superior capability in learning aligned, privacy-preserving representations across modalities.


These results also indicate that, compared to the uni-modal task, the multi-modal task requires the model to capture the feature relationships between positive and negative samples more precisely to perform accurate retrieval. Consequently, the multi-modal task imposes stricter evaluation criteria on DP contrastive learning algorithms. Therefore, our subsequent experiments and analyses will mainly focus on multi-modal evaluation.

\subsubsection{Batch Size Scaling}~\Cref{fig:batch_size_scaling} illustrates how retrieval accuracy (I→T and T→I) evolves with increasing batch size (×16 scale) across four datasets under $\epsilon=10$. \toolname exhibits significantly stronger scaling behavior compared to baseline methods. While baseline methods plateau or even degrade at larger batch sizes especially on challenging datasets like {Fashion} and {PEDES}, \toolname consistently improves performance as batch size grows, often achieving dramatic gains. This indicates that our approach better leverages larger batches to enhance gradient estimation and cross-modal alignment under privacy constraints.


\begin{table}[H]
\normalsize
\setlength{\tabcolsep}{3pt}
    \centering
    \renewcommand\arraystretch{1}
    \begin{tabular}{p{0.97\columnwidth}}
    \Xhline{1.0pt}
         \rowcolor{gray0} \noindent \textbf{Answers to RQ1}: \toolname significantly outperforms existing methods on feature utility, convergence efficiency, and batch size scaling across two distinct privacy budgets. On average, \toolname achieves 5.6\% and 20.1\% higher accuracy than that of the best baseline on uni-modal and multi-modal evaluation, respectively.\\
    \Xhline{1.0pt}
    \end{tabular}
\end{table}

\subsection{RQ2: How does \toolname improves the privacy-utility tradeoff in DP contrastive learning?}
\label{subsec:rq2}

\begin{figure}[!t]
    \centering
    \setlength{\abovecaptionskip}{0pt}
    \includegraphics[width=0.98\linewidth]{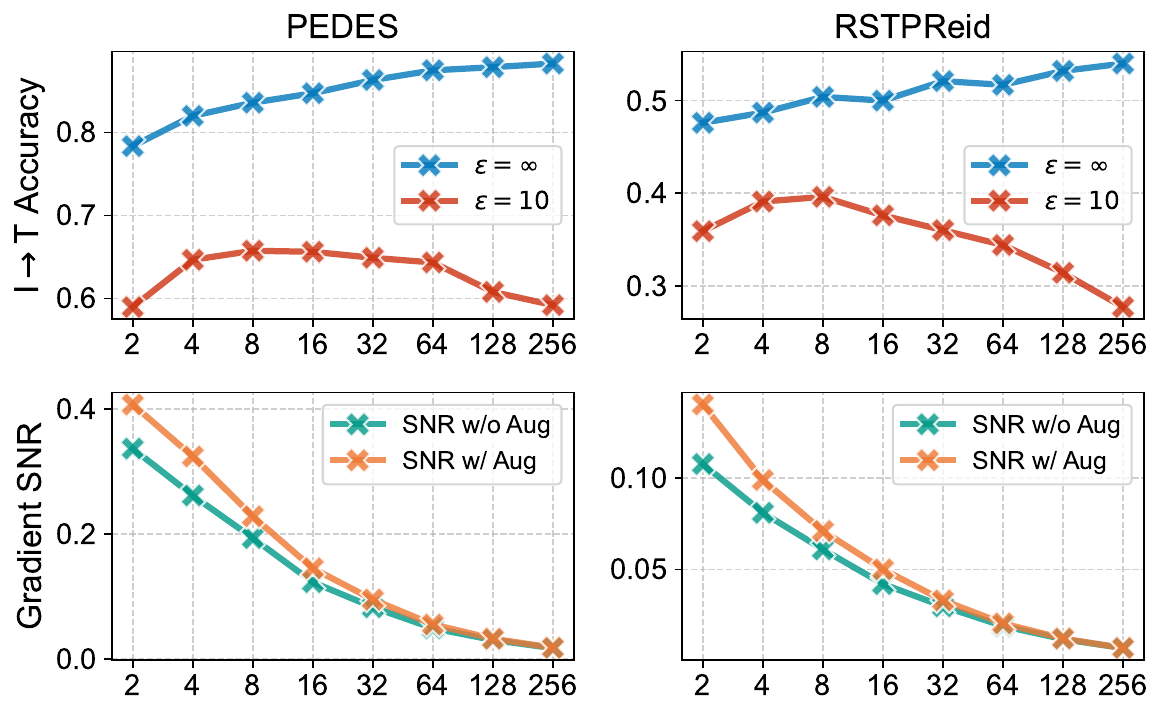}
    \caption{I→T retrieval accuracy (first row) and gradient Signal-to-Noise Ratio (SNR) of gradients (second row) with different group sizes.}
    \label{fig:tradeoff}
\end{figure}

This section explores how the two key modules of \toolname, i.e., bounding group-level contribution and intra-group sample augmentation, improve the privacy-utility tradeoff in DP contrastive learning. We first conduct the ablation study to show that each module of \toolname enhances the utility of DP contrastive learning.~\Cref {tab:ablation} shows that introducing the BGC module improves retrieval performance across all datasets, validating the effectiveness of bounding group-level contributions. Adding the intra-group sample augmentation further improves the performance of \toolname (albeit modestly, as we set $N_a=1$ for efficiency in the main experiments; we later show that larger $N_a$ values can significantly enhance the effectiveness of this component in~\Cref{subsec:rq4}). 
To further explore the reason, we calculate (1) the I→T retrieval accuracy under $\epsilon=\{10,\infty\}$, where $\epsilon=\infty$ means non-private, and (2) the gradient Signal-to-Noise Ratio (SNR) of \toolname with and without group-intra sample augmentation under different group size $S$ on {PEDES} and {RSTPReid}. The gradient SNR is defined as the norm of clean aggregated gradients divided by the norm of Gaussian noise at the first training iteration. Higher gradient SNR means that the noisy gradients should contain more useful information. The results on {Fashion} and {ROCO} are presented in Appendix~\ref{ap:subsec:tradeoff}.

\begin{figure}[!t]
    \centering
    \setlength{\abovecaptionskip}{0pt}
    \includegraphics[width=0.98\linewidth]{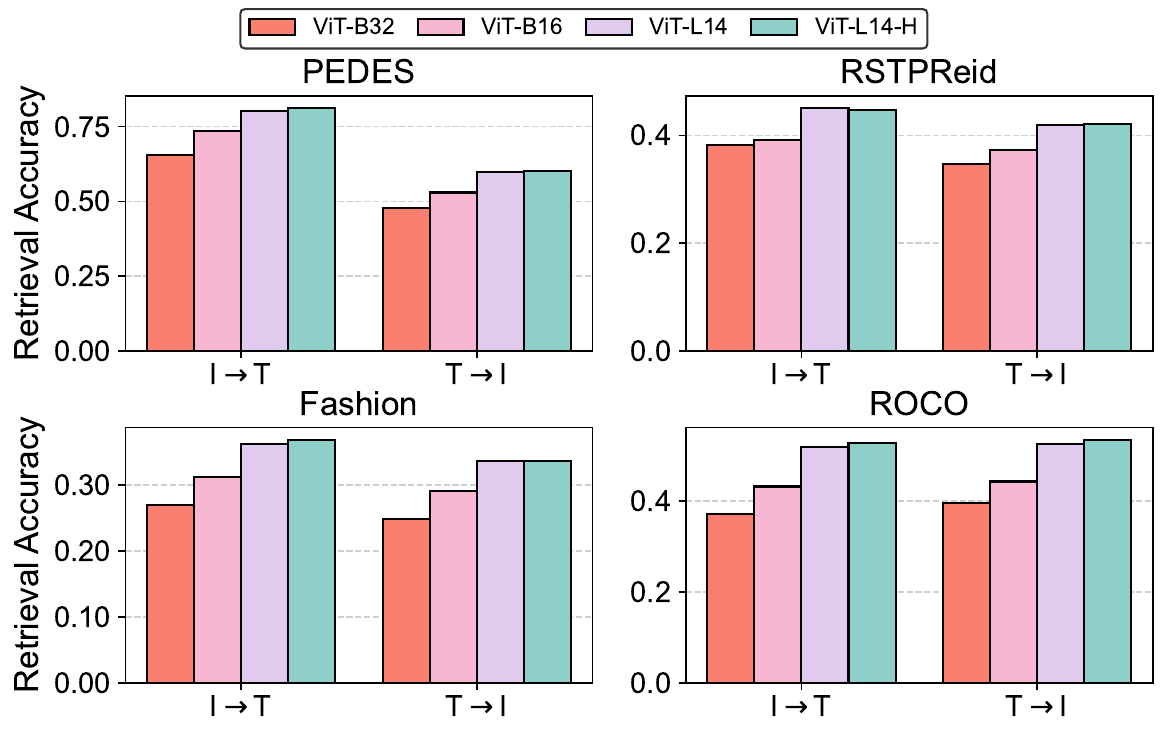}
    \caption{Retrieval accuracy of \toolname with different network architectures for the multi-modal encoders.}
    \label{fig:model_arch}
\end{figure}

\Cref{fig:tradeoff} shows the overall results. Under non-private settings ($\epsilon=\infty$), the I→T retrieval accuracy generally improves as group size increases (from 2 to 256) across both {PEDES} and {RSTPReid} (first row in~\Cref{fig:tradeoff}, blue curve). Larger groups enhance contrastive learning by providing richer negative samples, leading to better cross-modal alignment. This finding is consistent with numerous prior non-private contrastive learning studies (e.g., SimCLR~\cite{simclr}, CLIP~\cite{CLIP}), which similarly observe improved representation quality with larger negative samples. The performance stabilizes or slightly plateaus at large group sizes (e.g., 128–256), indicating diminishing returns beyond a certain scale.

However, the benefits from large negative samples are not the same in DP contrastive learning. As illustrated in the first row of~\Cref{fig:tradeoff} (red curve), retrieval accuracy initially improves as the group size increases, peaking at a moderate range (approximately 4–16 samples). Beyond this optimal point, performance declines sharply, a trend particularly pronounced on the {RSTPReid} dataset. The underlying mechanism for this phenomenon is revealed in the second row of~\Cref{fig:tradeoff}, which tracks the Gradient Signal-to-Noise Ratio (SNR). We observe that the gradient SNR decays rapidly as the group size expands, regardless of whether data augmentation is applied. This indicates that while larger groups theoretically provide richer negative signals for representation learning, they simultaneously amplify the impact of DP-induced noise relative to the true gradient signal. Consequently, once the group size exceeds a critical threshold, the degradation in gradient quality (low SNR) outweighs the informational gain from additional negatives, leading to significant performance drops.

\begin{table}[!t]
\renewcommand{\arraystretch}{1.1}
\setlength{\tabcolsep}{5.5pt}
    \centering
    \caption{Ablation study on the impact of the proposed module BGC (bounding group-level contribution) and ISA (intra-group sample augmentation) over 5 runs.}
    \label{tab:ablation}
    \resizebox{0.48\textwidth}{!}{
    \begin{tabular}{cc|cc|cc|cc|rc}
    \toprule
    \multirow{2}{*}{BGC} & \multirow{2}{*}{ISA} & \multicolumn{2}{c|}{{PEDES}} & \multicolumn{2}{c|}{{RSTPReid}} & \multicolumn{2}{c|}{{Fashion}} & \multicolumn{2}{c}{{Roco}} \\
    \Xcline{3-10}{0.5pt}
     & & I$\rightarrow$T & T$\rightarrow$I & I$\rightarrow$T & T$\rightarrow$I & I$\rightarrow$T & T$\rightarrow$I & I$\rightarrow$T & T$\rightarrow$I \\
    \hline
    & & 18.5 & 13.3 & 17.2 & 14.4 & 3.6 & 3.0 & 19.4 & 20.5\\
    \checkmark & & 65.5 & 47.3 & 37.4 & 33.7 & 25.3 & 23.5 & 36.5 & 36.5\\
    \checkmark & \checkmark & 65.9 & 48.1 & 38.0 & 34.5 & 26.3 & 24.7 & 37.5 & 38.4\\
    \bottomrule
\end{tabular}
}
\end{table}

These results highlight the core idea behind the grouping strategy: rather than using all batch samples as negatives, \toolname trades off some negatives to achieve higher gradient SNR. Our results suggest an optimal group size exists (approximately 4–16) that balances signal richness and noise control under privacy constraints.  “SNR w/ Aug” consistently outperforms “SNR w/o Aug”, especially when the group size is small. This demonstrates that negative samples augmentation mitigates the reduced negative sample diversity caused by grouping, which ensures more consistent gradient signals among different groups. 

\begin{table}[H]
\normalsize
\setlength{\tabcolsep}{3pt}
    \centering
    \renewcommand\arraystretch{1}
    \begin{tabular}{p{0.97\columnwidth}}
    \Xhline{1.0pt}
         \rowcolor{gray0} \noindent \textbf{Answers to RQ2}: Both key modules of \toolname, bounding group-level contribution, and intra-group sample augmentation improve the privacy-utility tradeoff. Grouping strategy trades off some negative sample quantity for a higher gradient SNR, while augmentation mitigates the reduced negative sample diversity.\\
    \Xhline{1.0pt}
    \end{tabular}
\end{table}

\subsection{RQ3: Does \toolname generalize to various contrastive training strategies?}
\label{subsec:rq3}

\subsubsection{Fine-tuned Layers} 
To assess the generality and compatibility of \toolname with common parameter-efficient fine-tuning (PEFT) strategies, such as LoRA, we conduct experiments where only specific attention projections (Query \(Q\), Key \(K\), or Value \(V\)) are updated during fine-tuning, while the rest of the model remains frozen. We evaluate \toolname on four investigated cross-modal retrieval datasets, reporting I→T and T→I retrieval accuracy under \(\epsilon=10\).

\begin{table}[!t]
\renewcommand{\arraystretch}{1.1}
\setlength{\tabcolsep}{5.5pt}
    \centering
    \caption{Retrieval accuracy of image-to-text (I$\rightarrow$T) and text-to-image (T$\rightarrow$I) retrieval when fine-tuning different layers of the models. Q+V is used in our main experiments.}
    \label{tab:lora_qkv}
    \resizebox{0.48\textwidth}{!}{
    \begin{tabular}{l|rc|rc|rc|rc}
    \toprule
    \multirow{2}{*}{Layer} & \multicolumn{2}{c|}{{PEDES}} & \multicolumn{2}{c|}{{RSTPReid}} & \multicolumn{2}{c|}{{Fashion}} & \multicolumn{2}{c}{{Roco}} \\
    \Xcline{2-9}{0.5pt}
     & I$\rightarrow$T & T$\rightarrow$I & I$\rightarrow$T & T$\rightarrow$I & I$\rightarrow$T & T$\rightarrow$I & I$\rightarrow$T & T$\rightarrow$I \\
    \hline
    Q & 56.8 & 40.6 & 30.6 & 27.8 & 17.9 & 16.1 & 32.4 & 34.0\\
    K & 55.8 & 40.2 & 30.6 & 26.8 & 18.5 & 16.7 & 33.0 & 33.1\\
    V & 64.5 & 47.8 & 36.8 & 33.5 & 25.9 & 23.7 & \textbf{37.9} & 37.7\\
    Q+K & 58.9 & 41.9 & 31.9 & 28.8 & 20.2 & 18.3 & 33.1 & 33.6\\
    Q+V & 65.9 & 48.1 & 38.0 & 34.5 & 26.3 & 24.7 & 37.5 & \textbf{38.4}\\
    K+V & 66.1 & 47.7 & \textbf{40.2} & 34.0 & \textbf{26.7} & \textbf{24.9} & 37.5 & 38.2\\
    Q+K+V & \textbf{66.7} & \textbf{48.4} & 38.9 & \textbf{35.2} & 26.5 & 24.7 & 37.2 & 37.3\\
    \bottomrule
\end{tabular}
}
\end{table}

\Cref{tab:lora_qkv} shows that \toolname maintains strong performance even when fine-tuning is restricted to a small subset of attention parameters, presenting its compatibility with parameter-efficient adaptation. Notably, updating only the (\(V\)) projections achieves competitive results, e.g., 64.5\% I→T on {PEDES}, comparable to or better than tuning \(Q\) or \(K\) alone. Moreover, combining any two components (e.g., \(Q+V\)) further improves accuracy, and full \(Q+K+V\) fine-tuning yields the best overall 
performance. Critically, even the most constrained setting (e.g., tuning only \(V\)) significantly outperforms non-private baselines without fine-tuning, confirming that \toolname{}'s learned representations are amenable to lightweight adaptation. This suggests \toolname generalizes well to practical PEFT scenarios, i.e., LoRA, where only low-rank updates to attention matrices are applied. Thus, \toolname is suitable for deployment in resource-constrained environments.

\subsubsection{Model Architectures} 
\label{subsubsec:modelarch}
In~\Cref{subsec:rq1}, we implement \toolname using ViT-B32, which contains approximately 151M parameters, for multi-modal contrastive learning. To evaluate the generalization capability of \toolname across different mainstream vision encoder architectures, we implement \toolname using three other widely used ViT variants, ViT-B16, ViT-L14, and ViT-L14-H. The details of these ViT variants are put in Appendix~\ref{subsubsec:modelarch}. All models are trained under (\(\epsilon=10\)) and evaluated on four investigated retrieval datasets. This setup mimics real-world deployment scenarios where practitioners often need to adapt methods to various backbones.



\Cref{fig:model_arch} exhibits that \toolname shows strong architectural agnosticism: it consistently delivers competitive performance across all tested ViT variants, from lightweight (ViT-B32) to large-scale (ViT-L14-H), confirming its broad compatibility with modern vision encoders. Notably, on complex datasets like {PEDES} and {RSTPReid}, larger models (ViT-L14, ViT-L14-H) yield higher accuracy, but even smaller variants (ViT-B32/B16) achieve meaningful results, showing that \toolname can be effectively deployed on resource-limited systems. The consistent performance ranking across tasks demonstrates that \toolname is model-agnostic and generalizes across popular architectures, which is a key requirement for real-world, modular, and privacy-preserving vision-language systems.

\begin{table}[H]
\normalsize
\setlength{\tabcolsep}{3pt}
    \centering
    \renewcommand\arraystretch{1}
    \begin{tabular}{p{0.97\columnwidth}}
    \Xhline{1.0pt}
         \rowcolor{gray0} \noindent \textbf{Answers to RQ3}: \toolname generalizes well across diverse contrastive learning setups. It maintains strong performance under parameter-efficient fine-tuning, LoRA, and across multiple ViT architectures under the same privacy budget, demonstrating both architectural agnosticism and compatibility with practical adaptation strategies.\\
    \Xhline{1.0pt}
    \end{tabular}
\end{table}

\subsection{RQ4: How do the hyper-parameters of \toolname affect the
performance of contrastive learning}
\label{subsec:rq4}

This section explores the performance of  \toolname affected by the hyper-parameters, the privacy budget $\epsilon$, the number of augmentations $N_a$, and the augmentation methods.




\begin{figure}[!t]
    \centering
    \setlength{\abovecaptionskip}{0pt}
    \includegraphics[width=0.98\linewidth]{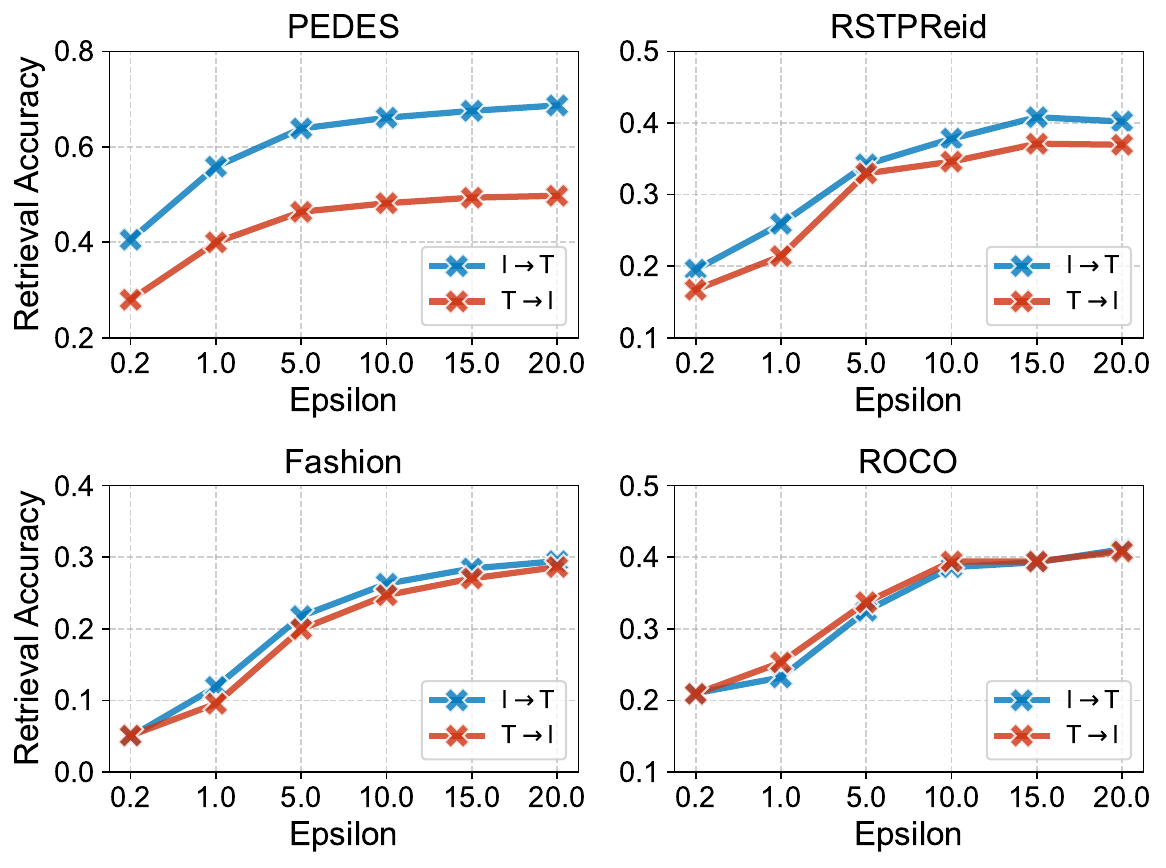}
    \caption{Retrieval accuracy of \toolname under different privacy budgets, i.e., the epsilon $\epsilon$.}
    \label{fig:epsilon}
\end{figure}

\begin{table}[!t]
\renewcommand{\arraystretch}{1.1}
\setlength{\tabcolsep}{5.5pt}
    \centering
    \caption{Retrieval accuracy of image-to-text (I$\rightarrow$T) and text-to-image (T$\rightarrow$I) retrieval with different number of augmented negative samples $N_a$ over 5 runs. $N_a=1$ is used in our main experiments.}
    \label{tab:aug_num}
    \resizebox{0.48\textwidth}{!}{
    \begin{tabular}{l|rc|rc|rc|rc}
    \toprule
    \multirow{2}{*}{\textbf{$N_a$}} & \multicolumn{2}{c|}{{PEDES}} & \multicolumn{2}{c|}{{RSTPReid}} & \multicolumn{2}{c|}{{Fashion}} & \multicolumn{2}{c}{{Roco}} \\
    \Xcline{2-9}{0.5pt}
     & I$\rightarrow$T & T$\rightarrow$I & I$\rightarrow$T & T$\rightarrow$I & I$\rightarrow$T & T$\rightarrow$I & I$\rightarrow$T & T$\rightarrow$I \\
    \hline
    0 & 65.5 & 47.3 & 37.4 & 33.7 & 25.3 & 23.5 & 36.5 & 36.5\\
1 & 65.9 & 48.1 & 38.0 & 34.5 & 26.3 & 24.7 & 37.5 & 38.4\\
2 & 66.0 & 48.3 & 38.4 & 34.7 & 26.9 & 25.3 & \textbf{38.1} & 38.2\\
3 & \textbf{66.6} & 48.7 & 38.1 & 35.1 & 26.9 & 25.5 & 37.9 & 38.8\\
4 & 66.5 & \textbf{48.8} & \textbf{39.1} & \textbf{35.2} & \textbf{27.4} & \textbf{26.1} & \textbf{38.1} & \textbf{39.1}\\
    \bottomrule
\end{tabular}
}
\end{table}

\subsubsection{Privacy Budget} \Cref{fig:epsilon} illustrates the retrieval accuracy of \toolname under varying privacy budgets \((\epsilon = \{0.2, 1.0, 5.0, 15.0, 20.0\})\) across four datasets. The results present a clear and consistent trend: performance improves as \(\epsilon\) increases, i.e., as privacy constraints are relaxed, but with diminishing returns beyond \(\epsilon \approx 5.0\).

On all datasets, both I→T and T→I retrieval accuracy rise sharply from \(\epsilon = 0.2\) to \(\epsilon = 5.0\), indicating that extremely tight privacy (e.g., \(\epsilon < 1.0\)) severely limits model utility. Beyond \(\epsilon = 5.0\), gains become marginal. For example, on {PEDES}, I→T accuracy increases only from 0.65 at \(\epsilon=5.0\) to 0.69 at \(\epsilon=20.0\) approximately. This suggests that moderate privacy budgets ($\epsilon \geq 5.0$) are sufficient to achieve near-maximal utility in cross-modal retrieval tasks. These findings also validate that \toolname effectively trades off privacy for utility. It delivers strong performance even under moderate privacy (i.e., $\epsilon=$ 5 or 10), while avoiding the severe degradation seen at ultra-low $\epsilon$. This makes it more suitable for real-world deployment where privacy guarantees must be balanced against practical performance requirements without needing to sacrifice too much utility.

\subsubsection{Number of Augmentations}
~\Cref{tab:aug_num} evaluates the effect of varying the number of augmented negative samples \(N_a\) (from 0 to 4) on retrieval accuracy across four datasets. Results are averaged over 5 runs, showing consistent trends. Augmentation consistently improves performance. Even adding just one augmented sample (\(N_a=1\)) boosts both I→T and T→I accuracy compared to no augmentation (\(N_a=0\)). The gains are most pronounced on {RSTPReid} and {Roco}, where T→I accuracy increases by +1.5 and +2.6 points respectively at \(N_a=4\). Across all datasets, T→I retrieval benefits more from augmentation than I→T, with the exception of {RSTPReid}. 
We consider that it is likely because text representations are more sparse and semantically variable, making them more sensitive to the quantity of negative samples.

\begin{table}[!t]
\renewcommand{\arraystretch}{1.1}
\setlength{\tabcolsep}{6pt}
\centering
\caption{Retrieval accuracy of image-to-text (I$\rightarrow$T) and text-to-image (T$\rightarrow$I) retrieval with different augmentations on PEDES. Sentence Swap and Random Crop are used in our main experiments.}
\label{tab:aug_type}
\resizebox{0.48\textwidth}{!}{
    \begin{tabular}{l|cc|cc|cc}
    \toprule
    \multirow{2}{*}{\textbf{Augmentation}} & \multicolumn{2}{c|}{{Random Crop}} & \multicolumn{2}{c|}{{Color Jitter}} & \multicolumn{2}{c}{{Gaussian Blur}} \\
    \Xcline{2-7}{0.5pt}
     & I$\rightarrow$T & T$\rightarrow$I & I$\rightarrow$T & T$\rightarrow$I & I$\rightarrow$T & T$\rightarrow$I \\
    \hline
    Sentence Swap      & 65.9 & 48.1 & 65.6 & 47.9 & 65.6 & 47.7 \\
    Word Swap    & 65.8 & 48.4 & 65.2 & 48.0 & 65.2 & 48.0 \\
    Word Delete   & 65.6 & 47.8 & 65.6 & 47.8 & 65.1 & 47.5 \\
    \bottomrule
    \end{tabular}
}
\end{table}

\subsubsection{Different Augmentations}
\label{subsubsec:diffaug}
We evaluate the sensitivity of \toolname to different augmentation strategies on PEDES under $\epsilon=10$. Specifically, we examined three image augmentations (Random Crop, Color Jitter, Gaussian Blur) combined with three text augmentations (Sentence Swap, Word Swap, Word Delete) on the PEDES dataset. The details of these augmentations are presented in Appendix~\ref{ap:subsec:aug}. As shown in Table \ref{tab:aug_type}, the retrieval performance remains relatively stable across all combinations, with variations within 1.0\%. This indicates that \toolname is robust to different augmentation types. Notably, the combination of Random Crop and Sentence Swap yields the highest I$\rightarrow$T accuracy (65.9\%) and demonstrates consistent performance across different image augmentations. Therefore, we adopt this combination as the default setting for our main experiments.

\begin{table}[H]
\normalsize
\setlength{\tabcolsep}{3pt}
    \centering
    \renewcommand\arraystretch{1}
    \begin{tabular}{p{0.97\columnwidth}}
    \Xhline{1.0pt}
         \rowcolor{gray0} \noindent \textbf{Answers to RQ4}:  Utility improves significantly as the privacy budget $\epsilon$ increases up to $\approx$5.0, with diminishing returns beyond that. Adding just a few augmented negatives (e.g., $N_a$ = 3 or 4) consistently boosts retrieval accuracy without extra privacy cost.\\
    \Xhline{1.0pt}
    \end{tabular}
\end{table}

\section{Limitations and Future Work}
\label{subsec:limitations}



\noindent \textbf{Large-scale pre-training.}
Although our method exhibits strong batch-scaling properties and is therefore well-suited for large-scale distributed training, we were unable to conduct pre-training experiments on massive datasets due to limited computational resources. Nevertheless, the promising scaling behavior observed in our current experiments suggests that our approach has significant potential in large-scale contrastive pre-training scenarios. Exploring its effectiveness and performance when trained on billion-scale image-text datasets or web-scale corpora remains an important and exciting direction for future work.

\noindent \textbf{More contrastive learning methods.}
Our current implementation is primarily built upon the InfoNCE loss. However, the field of contrastive learning has rapidly evolved, with several alternative frameworks emerging in recent years. Notable examples include BYOL~\cite{byol}, which eliminates the need for negative samples through bootstrapping, and SigLIP~\cite{siglip}, which introduces a simple pairwise sigmoid loss that offers greater training stability and scalability. Extending \toolname to support these diverse contrastive paradigms could further enhance its flexibility and broaden its applicability across different self-supervised learning settings. 
\section{Related Work}
\label{sec:related_work}

\subsection{Private Contrastive Learning} 
\label{subsec:private_cl}

DP has been extensively studied in the context of supervised learning with decomposable losses (the loss of each sample is independent of others), where DP-SGD~\cite{dpsgd} and its refinements (such as improved privacy accounting~\cite{rdp,prv} and more effective clipping strategy~\cite{autoclip}) have become standard tools. However, contrastive learning, a cornerstone of modern self-supervised representation learning, poses unique challenges for private training. Due to its non-decomposable loss structure, where the loss for each sample depends on other samples in the same batch (e.g., via negative pairs), its gradient sensitivity is difficult to bound. Early attempts to apply DP to contrastive-like objectives focused on pairwise losses in convex settings~\cite{dppairlearning1, dppairlearning2}, but these methods are evaluated only on small tabular datasets and do not scale to deep contrastive models or multi-modal tasks. Later works extended the analysis to non-convex settings~\cite{dpcl1} by assuming knowledge of a global Lipschitz constant of the loss, which heavily depends on the encoder and is generally intractable to compute reliably~\cite{hard_lips}.

To circumvent these difficulties, several recent studies avoid using the original non-decomposable contrastive losses entirely. Xu et al.~\cite{dpcl2} adopt user-level DP but reformulate representation learning as a supervised classification task, requiring labeled data and forfeiting the benefits of self-supervision. Yu et al.~\cite{vip} train a vision foundation model (ViP) under DP by replacing InfoNCE with an instance-separable loss, the Mask AutoEncoder (MAE)~\cite{MAE} loss, that decouples sample interactions. Similarly, Sander et al.~\cite{dpcaption} achieve DP representation through training a DP caption model. Li et al.~\cite{dpcl3} propose perturbing the similarity matrix directly and computing a noisy contrastive loss, yet their privacy analysis relies on the strong assumption that each individual gradient has bounded sensitivity, which is hard to guarantee in practice for deep encoders. Although recent efforts have sought to enable private contrastive learning while preserving the original loss structure~\cite{batch-level,pair-level}, they still suffer from severe utility degradation. 

\vspace{1mm}
\subsection{Attacking Embedding Models} 
\label{subsec:attack_cl}

Various studies have revealed that contrastive learning, while powerful for representation learning, can inadvertently encode sensitive information that enables privacy attacks. Song et al.~\cite{attackcl1} showed that embeddings from non-private models can be inverted to reconstruct original inputs, highlighting the risk of data leakage. This vulnerability is amplified in contrastive settings. In the domain of membership inference attacks (MIAs) against visual self-supervised learning, the work of~\cite{attackcl2} stands as the first study specifically targeting contrastive learning models. Their proposed method, EncoderMI, exploits differences in cosine similarity between embeddings of augmented views derived from member versus non-member data to train a binary classifier that infers membership status. This pioneering approach has established a foundational framework for subsequent research on MIAs in visual representation learning. For instance,~\cite{attackcl3} adapted the similarity-based attack paradigm to the person re-identification task, demonstrating that privacy can be effectively compromised even under black-box settings by constructing attack features from embedding similarities. Further advancements have extended these ideas to masked pretraining architectures:~\cite{attackcl4} enhanced EncoderMI by measuring intra-image feature similarity across different image regions, thereby achieving broader attack applicability. Recently, Sun et al.~\cite{attackcl5} propose LpLA, which infers membership status by using the statistical distribution characteristics of the $p$-norm of feature vectors, and further uncover the risks of privacy leakage in contrastive learning.



\vspace{1mm}
\section{Conclusion}
\label{sec:conclusion}

We have presented \toolname, a principled approach to DP contrastive learning that directly addresses the root cause of utility degradation in DP settings: excessive inter-sample dependency. By structuring contrastive learning around disjoint groups and localizing gradient computation to intra-group interactions, \toolname inherently reduces gradient sensitivity, leading to a higher signal-to-noise ratio under DP noise injection. Complemented by intra-group augmentation to preserve negative sample diversity, our framework achieves state-of-the-art performance across both uni-modal and multi-modal tasks, significantly closing the utility gap between private and non-private contrastive learning.


\clearpage





\bibliographystyle{ACM-Reference-Format}
\bibliography{bib}

\section*{Ethics Considerations}

We structure the ethical considerations by linking our stakeholder analysis to the impacts during the research process (data handling) and the publication of results (deployment and application). We then present the mitigations and conclude with the justification for this research.

\vspace{1mm}
\noindent \textbf{Stakeholder Analysis.} \toolname{} primarily involves three stakeholder groups: (1) \textit{Data Subjects}, whose sensitive data (e.g., medical records, facial images, or behavioral data) are used in training; (2) \textit{Data Owners} (e.g., hospitals and research institutions) that hold private datasets and face regulatory barriers to sharing; and (3) \textit{Researchers and Practitioners} in machine learning and privacy who develop and deploy privacy-preserving models.

\vspace{1mm}
\noindent \textbf{Impacts of the Research.}

\noindent \textit{Positive Impacts.} (1) Enabling Secure Data Collaboration: \toolname{} improves the utility of DP representations learned through contrastive learning. This allows data owners to share useful embeddings without violating privacy regulations, while enabling researchers to access high-quality representations from otherwise inaccessible datasets. (2) Promoting Reproducibility and Transparency: By open-sourcing our code and providing detailed implementation, we facilitate reproducible research and allow the community to audit and build upon our method responsibly.

\noindent \textit{Negative Impacts.} (1) Bias Amplification: Even with DP, \toolname{} may inherit or amplify societal biases present in the original data. Downstream models trained on these representations could exhibit fairness issues, potentially harming certain data subjects. (2) Risk of Misuse: The publication of advanced DP contrastive learning techniques could be exploited by malicious actors to strengthen membership inference or model inversion attacks, or to develop more effective surveillance systems, posing risks to individual privacy and societal trust.

\vspace{1mm}
\noindent \textbf{Mitigations.}

\noindent \textit{Implemented Mitigations.} (1) Protecting Data Subjects: We provide rigorous DP guarantees with carefully calibrated noise and tight privacy budget tracking throughout training, bounding the influence of any individual’s data. (2) Supporting Practitioners: We conduct comprehensive evaluations, including downstream task assessments and privacy attacks, to ensure practitioners understand the realistic privacy-utility trade-off and avoid over-reliance on the method.

\noindent \textit{Recommended Future Measures.} (1) For Data Subjects: Downstream users should perform fairness audits using metrics such as demographic parity and equalized odds, and consider integrating fairness constraints during fine-tuning. (2) For Society: We recommend adopting model watermarking or output perturbation techniques, along with strict access controls, to deter potential misuse of the released method and derived models.

\vspace{1mm}
\noindent \textbf{Justification for the Research.} 
The benefits of this work outweigh the potential risks given the mitigations in place. This research is ethically justified because: (1) \toolname{} advances the utility-privacy trade-off in contrastive learning under DP, addressing a key challenge in privacy-preserving representation learning; (2) open-sourcing the code with detailed privacy accounting promotes transparency and enables community-driven risk identification and improvement; and (3) effective DP contrastive learning is essential for unlocking the value of sensitive data in collaborative AI development while respecting individual privacy.

\section*{Open Science}

This paper follows the ACM CCS Open Science policy to promote transparency and reproducibility. We share all artifacts necessary to evaluate the core contributions of \toolname{}, including the implementation of our DP contrastive learning method, training scripts, evaluation code, and detailed privacy accounting.

The following artifacts are required to reproduce our experiments and verify the claimed better privacy-utility trade-off:

\begin{itemize}[leftmargin=*]
    \item Source code for the \toolname{} framework (including DP noise mechanisms, contrastive loss modifications, and training pipeline).
    \item Configuration files and hyperparameters for main reported experiments.
    \item Evaluation scripts for downstream tasks and privacy budget tracking.
    \item Documentation and README with step-by-step reproduction instructions.
\end{itemize}

All artifacts are hosted anonymously on GitHub and can be accessed by the program committee via this link during the review period: \url{https://anonymous.4open.science/r/DP-GCL-9473/}. The datasets and models will be automatically downloaded by the scripts. A permanent public repository (with full code, documentation, and models) will be released upon acceptance.

\vspace{3mm}
\begin{center}
    \Large{\textbf{Appendix}}
\end{center}

\appendix

\setcounter{section}{0}
\renewcommand\thesection{\Alph{section}}

\section{Proofs}
\label{apsec:MissingProofs}

\begin{table*}[!t]
\small
    \centering
    \caption{Summary of investigated datasets used for evaluation.}
    \label{tab:datainfo}
    \setlength{\tabcolsep}{2.9mm}{
    \resizebox{1.0\textwidth}{!}{
    \begin{tabular}{l|lccccccc}
    \toprule
    \textbf{Dataset} & \textbf{Size} & \textbf{Image Dimension} & \textbf{Text Length} & \textbf{Category}  & \textbf{Training} & \textbf{Validation} & \textbf{Test} & \textbf{Task}  \\
    \hline
    {FashionMNIST}~\cite{fmnist} & 70,000 & 28$\times$28$\times$1 & - & 10 & 55,000 & 5,000 & 10,000 &  \\
    {CIFAR-10}~\cite{cifar10} & 60,000 & 32$\times$32$\times$3 & - & 10 & 45,000 & 5,000 & 10,000 & Classification \\
    {EuroSAT}~\cite{eurosat} & 27,000 & 64$\times$64$\times$3 & - & 10 & 21,000 & 2,000 & 4,000 & (uni-modal) \\
    {Camelyon}~\cite{camelyon1} & 337,340 & 64$\times$64$\times$3 & - & 2 & 269,538 & 32,898  & 34,904 &  \\
    {ImageNet}~\cite{imagenet} & 1,431,167 & 256$\times$256$\times$3 & - & 1000 & 1,281,167 & 50,000& 100,000 &  \\
    \hline
    {PEDES}~\cite{PEDES} & 238,768  & 128$\times$384$\times$3 & 13$\sim$96 & - & 190,916 & 23,882 & 23,970 &   \\
    {RSTPReid}~\cite{RSTPReid} & 41,010  & 200$\times$682$\times$3 & 11$\sim$469 & - & 37,010 & 2,000 & 2,000 &  Retrieval \\
    {Fashion}~\cite{Fashion} & 42,537  & 153$\times$224$\times$3 & 4$\sim$93 & - & 32,029 & 2,000 & 8,508 &  (multi-modal) \\
    {ROCO}~\cite{ROCO} & 79,793  & 682$\times$784$\times$3 & 1$\sim$778 & - & 59,962 & 9,904 & 9,927 &   \\
    \bottomrule
\end{tabular}
}}
\end{table*}

In this section, we provide the rigorous proofs for Theorems~\ref{the:group_clipping}, \ref{the:group_grad}, and \ref{the:group_grad_aug}, which establish the differential privacy guarantees of our proposed method. To facilitate the sensitivity analysis, we consider two neighboring datasets, denoted as $\mathcal{D} = \{(x_i, x_i^+)\}_{i=1}^{B}$ and $\mathcal{D}' = \mathcal{D} \cup \{(x_*, x_*^+)\}$. Here, $\mathcal{D}'$ differs from $\mathcal{D}$ by exactly one additional positive pair $(x_*, x_*^+)$. Let $s$ be the corresponding pairwise similarity matrix derived from these datasets.

We employ a grouping strategy where each dataset is partitioned into disjoint subsets to bound the gradient sensitivity locally. Specifically, $\mathcal{D}$ is divided into $K = \lceil B/S \rceil$ groups, denoted as $\mathcal{G} = \{\mathcal{G}_1, \dots, \mathcal{G}_K\}$, while $\mathcal{D}'$ is partitioned into $K' = \lceil (B+1)/S \rceil$ groups, denoted as $\mathcal{G}' = \{\mathcal{G}'_1, \dots, \mathcal{G}'_{K'}\}$. By construction, each group contains at most $S$ samples. Due to the addition of a single sample pair in $\mathcal{D}'$, the number of groups may either remain unchanged or increase by one. Formally, this implies that $K'$ satisfies the condition $K' \in \{K, K+1\}$. This structural property is crucial for analyzing how the inclusion of a single neighbor affects the aggregated gradients across groups.

\noindent\textbf{Proof of Theorem~\ref{the:group_clipping}.} 
For any two neighboring datasets \(\mathcal{D}\) and \(\mathcal{D}'\), the mechanism \(g_{\text{g\_clip}}\) has bounded global sensitivity
\[
\Delta_{g_{\text{g\_clip}}} = \max_{\mathcal{D} \sim \mathcal{D}'} \bigl\| g_{\text{g\_clip}}(\mathcal{D}) - g_{\text{g\_clip}}(\mathcal{D}') \bigr\|_2 = (2K+1)C.
\]

\textit{Proof.} 
First, consider the case where \(K' = K\). Applying the triangle inequality, we have
\begin{equation}
\begin{split}
\Delta_{g_{\text{g\_clip}}}' 
&= \Biggl\| 
\sum_{k=1}^{K}
\text{Clip}_C\Bigl(
    \nabla_\theta 
    \sum_{i \in \mathcal{G}_k}
    -\log \frac{e^{s_{i,i}/\tau}}{\sum_{j=1}^B e^{s_{i,j}/\tau}}
\Bigr) \\
&\quad - 
\sum_{k=1}^{K}
\text{Clip}_C\Bigl(
    \nabla_\theta 
    \sum_{i \in \mathcal{G}'_k}
    -\log \frac{e^{s_{i,i}/\tau}}{\sum_{j=1}^{B-1} e^{s_{i,j}/\tau}}
\Bigr)
\Biggr\|_2 \\
&\leq  
\sum_{k=1}^{K}
\Biggl\|\text{Clip}_C\Bigl(
    \nabla_\theta 
    \sum_{i \in \mathcal{G}_K}
    -\log \frac{e^{s_{i,i}/\tau}}{\sum_{j=1}^B e^{s_{i,j}/\tau}}
\Bigr)\Biggr\|_2 \\
&\quad + 
\sum_{k=1}^{K}
\Biggl\|\text{Clip}_C\Bigl(
    \nabla_\theta 
    \sum_{i \in \mathcal{G}'_K}
    -\log \frac{e^{s_{i,i}/\tau}}{\sum_{j=1}^{B-1} e^{s_{i,j}/\tau}}
\Bigr)
\Biggr\|_2 \\
&\leq KC + KC = 2KC.
\end{split}
\label{eq:group_clip_K}
\end{equation}
Similarly, when \(K' = K+1\), we obtain \(\Delta_{g_{\text{g\_clip}}}'' = (2K+1)C\).
Therefore, \(\Delta_{g_{\text{g\_clip}}} = (2K+1)C\).

\noindent\textbf{Proof of Theorem~\ref{the:group_grad}.} 
For any two neighboring datasets \(\mathcal{D}\) and \(\mathcal{D}'\), the mechanism \(g_{\text{group}}\) has bounded global sensitivity
\[
\Delta_{g_{\text{group}}} = \max_{\mathcal{D} \sim \mathcal{D}'} \bigl\| g_{\text{group}}(\mathcal{D}) - g_{\text{group}}(\mathcal{D}') \bigr\|_2 = 2C.
\]
\textit{Proof.} 
Following the standard practice in privacy amplification (e.g., subsampling analysis in DP-SGD), we decouple the sensitivity analysis of gradient computation from the randomness of grouping. Specifically, the partitioning is performed such that all data pairs except \((x_*, x_*^+)\) are assigned to identical groups in both \(\mathcal{D}\) and \(\mathcal{D}'\).

When \(K' = K+1\), we have \(\mathcal{G}' = \{\mathcal{G}_1, \dots, \mathcal{G}_K, \mathcal{G}'_{K+1}\}\) with \(\mathcal{G}'_{K+1} = \{(x_*, x_*^+)\}\). The sensitivity is then
\begin{equation}
\begin{split}
\Delta_{g_{\text{group}}''} 
&= \Biggl\| 
\sum_{k=1}^{K}
\text{Clip}_C\Bigl(
    \nabla_\theta 
    \sum_{i \in \mathcal{G}_k}
    -\log \frac{e^{s_{i,i}/\tau}}{\sum_{j \in \mathcal{G}_k} e^{s_{i,j}/\tau}}
\Bigr) \\
&\quad - 
\sum_{k=1}^{K+1}
\text{Clip}_C\Bigl(
    \nabla_\theta 
    \sum_{i \in \mathcal{G}'_k}
    -\log \frac{e^{s_{i,i}/\tau}}{\sum_{j \in \mathcal{G}'_k} e^{s_{i,j}/\tau}}
\Bigr)
\Biggr\|_2 \\
&= \Bigl\|
\text{Clip}_C\Bigl(
    \nabla_\theta 
    \sum_{i \in \mathcal{G}'_{K+1}}
    -\log \frac{e^{s_{i,i}/\tau}}{\sum_{j \in \mathcal{G}'_{K+1}} e^{s_{i,j}/\tau}}
\Bigr)
\Bigr\|_2 \\
&\leq C.
\end{split}
\end{equation}

When \(K' = K\), we have \(\mathcal{G}' = \{\mathcal{G}_1, \dots, \mathcal{G}_{K-1}, \mathcal{G}'_K\}\) with \(\mathcal{G}'_K = \mathcal{G}_K \cup \{(x_*, x_*^+)\}\). The sensitivity becomes
\begin{equation}
\begin{split}
\Delta_{g_{\text{group}}'} 
&= \Biggl\| 
\sum_{k=1}^{K}
\text{Clip}_C\Bigl(
    \nabla_\theta 
    \sum_{i \in \mathcal{G}_k}
    -\log \frac{e^{s_{i,i}/\tau}}{\sum_{j \in \mathcal{G}_k} e^{s_{i,j}/\tau}}
\Bigr) \\
&\quad - 
\sum_{k=1}^{K}
\text{Clip}_C\Bigl(
    \nabla_\theta 
    \sum_{i \in \mathcal{G}'_k}
    -\log \frac{e^{s_{i,i}/\tau}}{\sum_{j \in \mathcal{G}'_k} e^{s_{i,j}/\tau}}
\Bigr)
\Biggr\|_2 \\
&= \Biggl\|
\text{Clip}_C\Bigl(
    \nabla_\theta 
    \sum_{i \in \mathcal{G}_K}
    -\log \frac{e^{s_{i,i}/\tau}}{\sum_{j \in \mathcal{G}_K} e^{s_{i,j}/\tau}}
\Bigr) \\
&\quad - 
\text{Clip}_C\Bigl(
    \nabla_\theta 
    \sum_{i \in \mathcal{G}'_K}
    -\log \frac{e^{s_{i,i}/\tau}}{\sum_{j \in \mathcal{G}'_K} e^{s_{i,j}/\tau}}
\Bigr)
\Biggr\|_2 \\
&\leq \Biggl\|
\text{Clip}_C\Bigl(
    \nabla_\theta 
    \sum_{i \in \mathcal{G}_K}
    -\log \frac{e^{s_{i,i}/\tau}}{\sum_{j \in \mathcal{G}_K} e^{s_{i,j}/\tau}}
\Bigr)\Biggr\|_2 \\
&\quad + 
\Biggl\|\text{Clip}_C\Bigl(
    \nabla_\theta 
    \sum_{i \in \mathcal{G}'_K}
    -\log \frac{e^{s_{i,i}/\tau}}{\sum_{j \in \mathcal{G}'_K} e^{s_{i,j}/\tau}}
\Bigr)
\Biggr\|_2 \\
&\leq C + C = 2C.
\end{split}
\end{equation}
Thus, \(\Delta_{g_{\text{group}}} = 2C\). Since the gradient computation in~\cref{eq:f_group_aug} is confined within each group, its privacy analysis is the same.

Next, we show that DP-GCL with randomized grouping achieves the same privacy guarantee as the fixed-grouping case. Let \(\mathcal{M}\) be the mechanism used by DP-GCL for gradient computation and let \(s\) denote the common source of randomness for grouping, distributed according to a probability density function \(p(s)\). For any fixed \(s\), the mechanism satisfies
\begin{equation}
\Pr[\mathcal{M}(D) \in \mathcal{O} \mid s] \leq e^\varepsilon \Pr[\mathcal{M}(D') \in \mathcal{O} \mid s] + \delta,
\end{equation}
which comes from the original definition of DP. When the grouping is randomized, we have
\begin{equation}
\begin{aligned}
\Pr[\mathcal{M}(D) \in \mathcal{O}]
    &= \int \Pr[\mathcal{M}(D) \in \mathcal{O} \mid s] \, p(s) \, \mathrm{d}s \\
    &\leq \int \bigl( e^\varepsilon \Pr[\mathcal{M}(D') \in \mathcal{O} \mid s] + \delta \bigr) p(s) \, \mathrm{d}s \\
    &= e^\varepsilon \int \Pr[\mathcal{M}(D') \in \mathcal{O} \mid s] \, p(s) \, \mathrm{d}s + \delta \\
    &= e^\varepsilon \Pr[\mathcal{M}(D') \in \mathcal{O}] + \delta.
\end{aligned}
\end{equation}
Hence, the \((\varepsilon, \delta)\)-DP guarantee holds even when the grouping is randomized.

\begin{table*}[!t]
    \centering
    \caption{Hyper-parameters for \toolname.}
    \label{tab:hp_ours}
    \resizebox{1.0\textwidth}{!}{\begin{tabular}{l|ccccc|cccc}
    \toprule
    \multirow{2}{*}{Hyper-parameter}& \multicolumn{5}{c|}{Uni-modal Task} & \multicolumn{4}{c}{Multi-modal Task}\\
    \Xcline{2-10}{0.5pt}
    & {{F-MNIST}} & {{CIFAR-10}} & {{EuroSAT}} & {{Camelyon}} & ImageNet & {{PEDES}} & {{RSTPReid}} & {{Fashion}} & {{ROCO}}\\
    \hline
    Batch size \(B\) & 2,048 & 2,048 & 2,048 & 2,048 & 65,536 & 8,192 & 8,192 & 8,192 & 8,192\\
    Training iterations \(T\) & 1,200 & 1,200 & 1,200 & 1200 & 1,200 & 240 & 160 & 320 & 560 \\
    Learning rate \(\lambda\)& $1\times10^{-3}$ & $1\times10^{-3}$ & $1\times10^{-3}$ & $1\times10^{-3}$ & $6\times10^{-1}$ & $1\times10^{-3}$ & $1\times10^{-3}$ & $1\times10^{-3}$ & $1\times10^{-3}$\\
    Optimizer & Adam & Adam  & Adam  & Adam  & LARS & Adam  & Adam  & Adam  & Adam \\
    Group size \(S\) & 16 & 16 & 16 & 16 & 256 & 16 & 16 & 16 & 16\\
    Augmentation times \(N_a\) & 1 & 1 & 1 & 1 & 1 & 1 & 1 & 1 & 1\\
    Noise scale \(\sigma\)  & 0.97 & 1.07 & 1.85 & 0.57 & 2.52 & 0.76 & 1.70 & 2.53 & 1.77\\
    Gradient norm \(C\)  & 1 & 1 & 1 & 1 & 1 & 0.01 &  0.01 &  0.01 &  0.01\\
    Temperature hyper-parameter \(\tau\)  & \(1/\sqrt{2}\) & \(1/\sqrt{2}\) & \(1/\sqrt{2}\) & \(1/\sqrt{2}\) & \(1/\sqrt{10}\) & \(1/100\) & \(1/100\) & \(1/100\) & \(1/100\)\\
    \bottomrule
\end{tabular}}
\end{table*}

\begin{table*}[h]
\centering
\footnotesize
    \caption{Examples of text augmentation.}
    \label{tab:aug_example}
    \begin{tabular}{p{0.08\linewidth} | p{0.42\linewidth} | p{0.42\linewidth}}
    \toprule
         \textbf{Dataset} & \textbf{Original Text} &  \textbf{Augmented Text} \\
    \midrule
        \multirow{3}*{{PEDES}} & \texttt{\detokenize{The woman is wearing dark pants and a patterned shirt. She has long black her down her back and is carrying a green bag in her left hand.}} & \texttt{\detokenize{She has long black her down her back and is carrying a green bag in her left hand. The woman is wearing dark pants and a patterned shirt.}} \\
       \hline
        \multirow{3}*{{RSTPReid}} & \texttt{\detokenize{The man is walking with a pocket in his hand. He is wearing a black jacket and grey trousers. His shoes are brown. He is wearing a pair of glasses and looking around.}} & \texttt{\detokenize{He is wearing a black jacket and grey trousers. His shoes are brown. He is wearing a pair of glasses and looking around. The man is walking with a pocket in his hand.}} \\
        \hline
        \multirow{4}*{{Fashion}} & \texttt{\detokenize{This man is wearing a short-sleeve shirt with pure color patterns. The shirt is with cotton fabric and its neckline is lapel. The trousers this man wears is of long length. The trousers are with cotton fabric and solid color patterns.}} & \texttt{\detokenize{The shirt is with cotton fabric and its neckline is lapel. The trousers this man wears is of long length. The trousers are with cotton fabric and solid color patterns. This man is wearing a short-sleeve shirt with pure color patterns.}} \\
        \hline
        \multirow{3}*{{ROCO}} & \texttt{\detokenize{CT demonstrating the minimum diameter of the patient's airway. CT: computed tomography. This CT image demonstrates the minimum luminal airway dimension found which was 8 mm x 3 mm.}} & \texttt{\detokenize{This CT image demonstrates the minimum luminal airway dimension found which was 8 mm x 3 mm. CT demonstrating the minimum diameter of the patient's airway.CT: computed tomography.}} \\
    \bottomrule
    \end{tabular}
\end{table*}


\section{Implementation Details}
\label{apsec:id}

\subsection{\toolname}
\label{ap:subsec:ours}

\Cref{tab:hp_ours} summarizes the key hyper-parameters used in \toolname across all uni-modal and multi-modal tasks. 

For most datasets, including F-MNIST, CIFAR-10, EuroSAT, Camelyon, and all multi-modal benchmarks, we maintain a consistent experimental configuration to ensure fair and reliable comparisons across methods. Specifically, we fix the group size \( S \) to 16 and apply only a single augmentation per sample (\( N_a = 1 \)). We use the Adam optimizer with a learning rate of \( 1 \times 10^{-3} \). The gradient clipping norm \( C \) is set to 1 for uni-modal tasks and 0.01 for multi-modal tasks.
Regarding batch size, we generally use 2,048 for smaller uni-modal datasets and 8,192 for multi-modal datasets. The total number of training iterations is carefully adjusted to balance privacy budget allocation and model convergence speed.

\begin{figure}
    \centering
    \includegraphics[width=\linewidth]{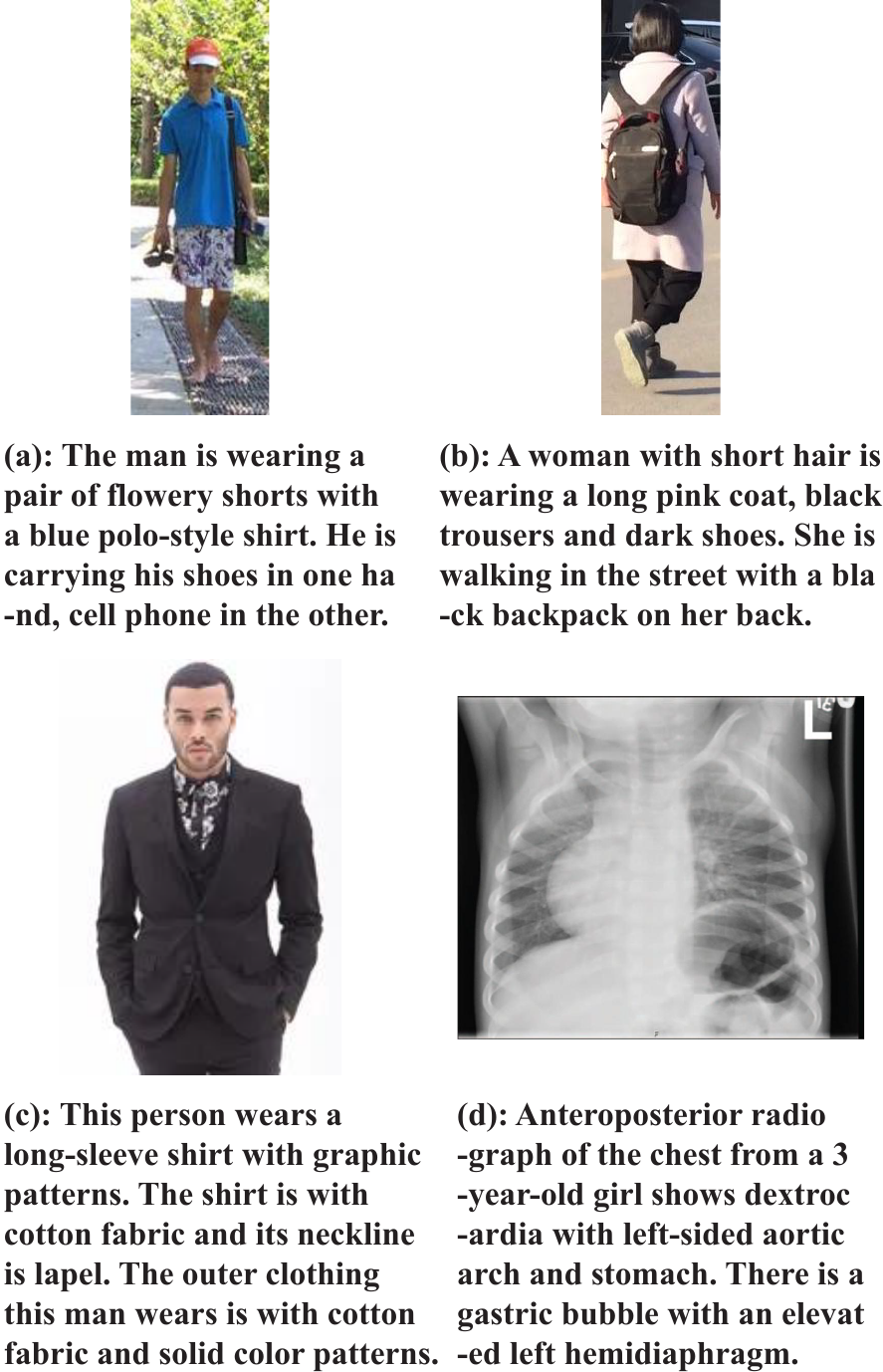}
    \caption{Example image-text pairs from four investigated multi-modal datasets: (a) {PEDES}, (b) {RSTPReid}, (c) {Fashion}, and (d) {ROCO}}
    \label{fig:multimodal_2x2}
\end{figure}



However, for the large-scale ImageNet dataset, we adapt the configuration to better accommodate its significantly larger data volume and more complex data distribution. Specifically, we increase the batch size to 65,536 and the group size \( S \) to 256. These adjustments help stabilize training and improve the efficiency of gradient estimation. Consequently, we switch to the LARS optimizer and adopt a higher learning rate of \( 6 \times 10^{-1} \). We also adjust the temperature parameter \( \tau \) to \( 1/\sqrt{10} \).

For all experiments, the noise scale \( \sigma \) is tuned individually for each dataset to achieve the target privacy budget of \( (10, 1/(N \log N)) \)-DP. This privacy guarantee is calculated using the RDP accountant~\cite{rdp}, where \( N \) denotes the size of the dataset.

\subsection{Logit-DP}
\label{ap:subsec:logitdp}

To obtain the benefits of both large batch size and low sensitivity, Logit-DP~\cite{pair-level} manipulates the gradients of the loss function to bound the pair-wise contributions. Specifically, based on the chain rule, they first decomposed the original gradient computation as follows:
\begin{equation}
    \nabla_\theta \mathcal{L}_\text{NCE} = {\sum_{i=1}^B \sum_{j=1}^B  \nabla_\theta s_{i,j} \cdot \nabla_{s_{i,j}} \mathcal{L}_\text{NCE}}.
    \nonumber
\end{equation}
And then, they define the gradient as:
\begin{equation}
    g_\text{pair} = {\sum_{i=1}^B \sum_{j=1}^B  \text{Clip}_C\left(\nabla_\theta s_{i,j}\right) \cdot \nabla_{s_{i,j}} \mathcal{L}_\text{NCE}}.
    \nonumber
\end{equation}
This gradient computation clips the gradient of the model parameters with respect to each pair-wise similarity. The sensitivity is $\Delta_{f_\text{pair}} = 2 (1+\frac{(B-2)e^2}{e^2+B-1})C$. When $B \gg e$, we have $\Delta_{f_\text{pair}} \approx (2+2e^2)C$. Logit-DP appears to preserve the benefits brought by large batch size, but its sensitivity still increases. Besides, since Logit-DP requires computing the gradients of each similarity for pair-wise gradient clipping, the computational complexity of gradient computation scales as $\mathcal{O}(B^2)$ with respect to the batch size $B$. This quadratic overhead severely degrades training efficiency. This is particularly problematic for contrastive learning, which heavily relies on large batch sizes to obtain meaningful negative samples and stable gradient estimates~\cite{ContrastiveLearning}.

\subsection{Augmentations}
\label{ap:subsec:aug}

To create semantically similar yet diverse positive pairs for DP multi-modal contrastive learning, we apply three image augmentations and three lightweight text augmentations.


\paragraph{Image Augmentations.}
For the visual modality, we adopt a suite of standard transformations widely used in self-supervised learning (e.g., SimCLR, CLIP). These augmentations focus on preserving high-level semantic content while altering low-level pixel statistics:
\begin{itemize}[leftmargin=*]
    \item \textbf{Random Crop}: We randomly crop a region from the original image and resize it to $224\times224$ pixels. This operation is typically combined with random horizontal flipping. By forcing the model to recognize objects from different spatial contexts and scales, this augmentation encourages the learning of spatially invariant features and reduces reliance on specific background cues.
    
    \item \textbf{Color Jitter}: We randomly perturb the brightness, contrast, saturation, and hue of the image within a predefined range. This introduces photometric variations that simulate different lighting conditions and camera settings. Consequently, the model learns to ignore color biases and focuses on shape and texture, thereby improving color invariance.
    
    \item \textbf{Gaussian Blur}: We apply Gaussian blurring with a randomly sampled kernel size and sigma value. This acts as a low-pass filter, removing high-frequency noise and fine-grained details. It forces the encoder to capture coarse-grained structural information and enhances robustness against varying levels of image sharpness or resolution.
\end{itemize}

\paragraph{Text Augmentations.}
For the textual modality, designing effective augmentations is more challenging due to the discrete nature of tokens and the strict requirement for semantic preservation. We design three simple, efficient, and semantics-preserving augmentations that introduce mild structural perturbations without altering the core meaning:
\begin{itemize}[leftmargin=*]
    \item \textbf{Sentence Swap}: For texts containing multiple sentences, we randomly swap the positions of two sentences with a probability of 0.5. This augmentation tests the model's ability to understand global semantic coherence regardless of the sequential order of independent semantic units, promoting robustness to discourse structure variations.
    
    \item \textbf{Word Swap}: We randomly swap adjacent words with a probability of 0.3. This introduces fine-grained local perturbations to the word order. Since natural language often exhibits local syntactic flexibility, this helps the model learn to rely on contextual embeddings rather than rigid positional encodings for nearby tokens.
    
    \item \textbf{Word Delete}: We randomly delete non-essential words with a probability of 0.2, while ensuring that a minimal semantic skeleton (e.g., subject-verb-object core) remains intact. This dropout-like strategy encourages the model to extract key semantic information from sparse inputs and prevents overfitting to specific stop words or filler tokens, thereby enhancing robustness to incomplete or noisy text descriptions.
\end{itemize}

Note that, with the exception of the settings described in~\Cref{subsubsec:diffaug}, \toolname exclusively employs Random Crop and Sentence Swap as its default augmentations 
\begin{figure*}[!t]
    \centering
    \begin{minipage}[b]{0.48\textwidth}
        \centering
        \includegraphics[width=\textwidth]{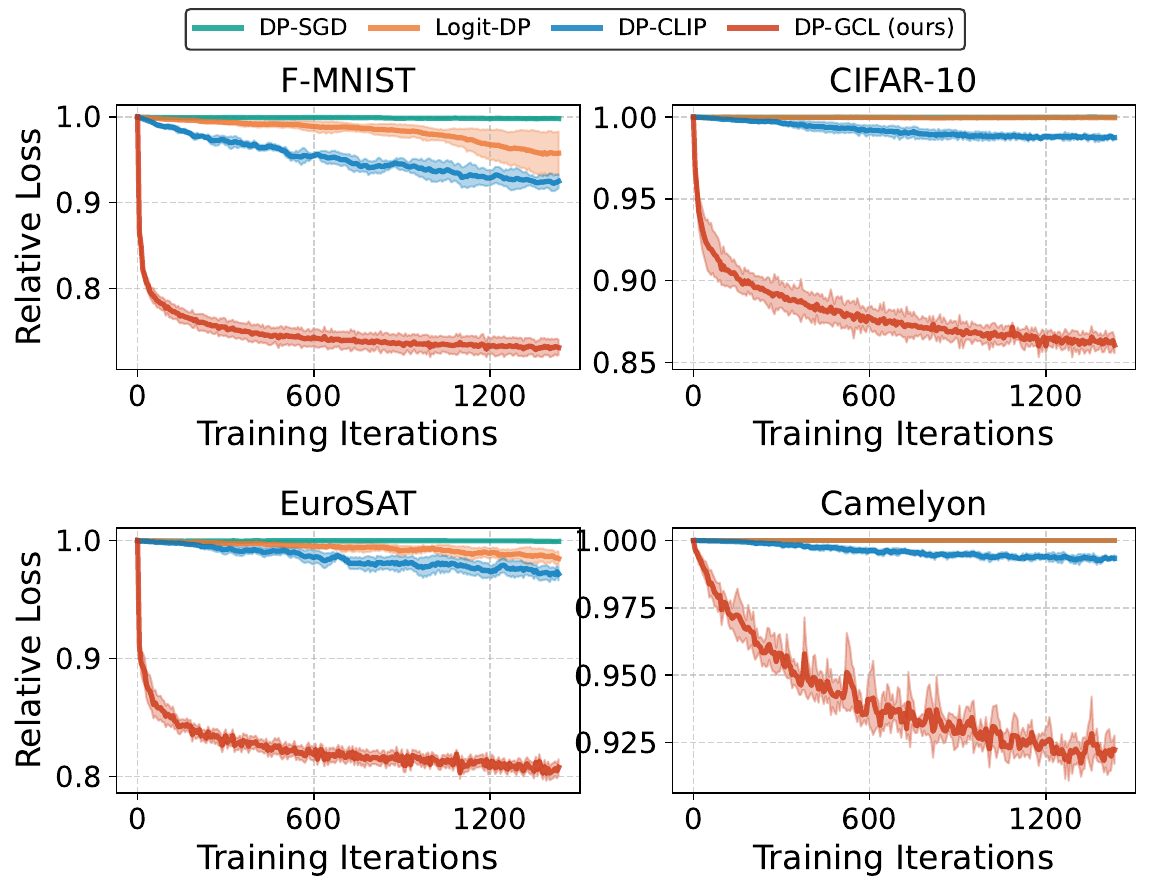}
        \captionof{subfigure}{Uni-modal Evaluation}
        \label{fig:loss_change_uni}
    \end{minipage}
    \hfill
    \begin{minipage}[b]{0.48\textwidth}
        \centering
        \includegraphics[width=\textwidth]{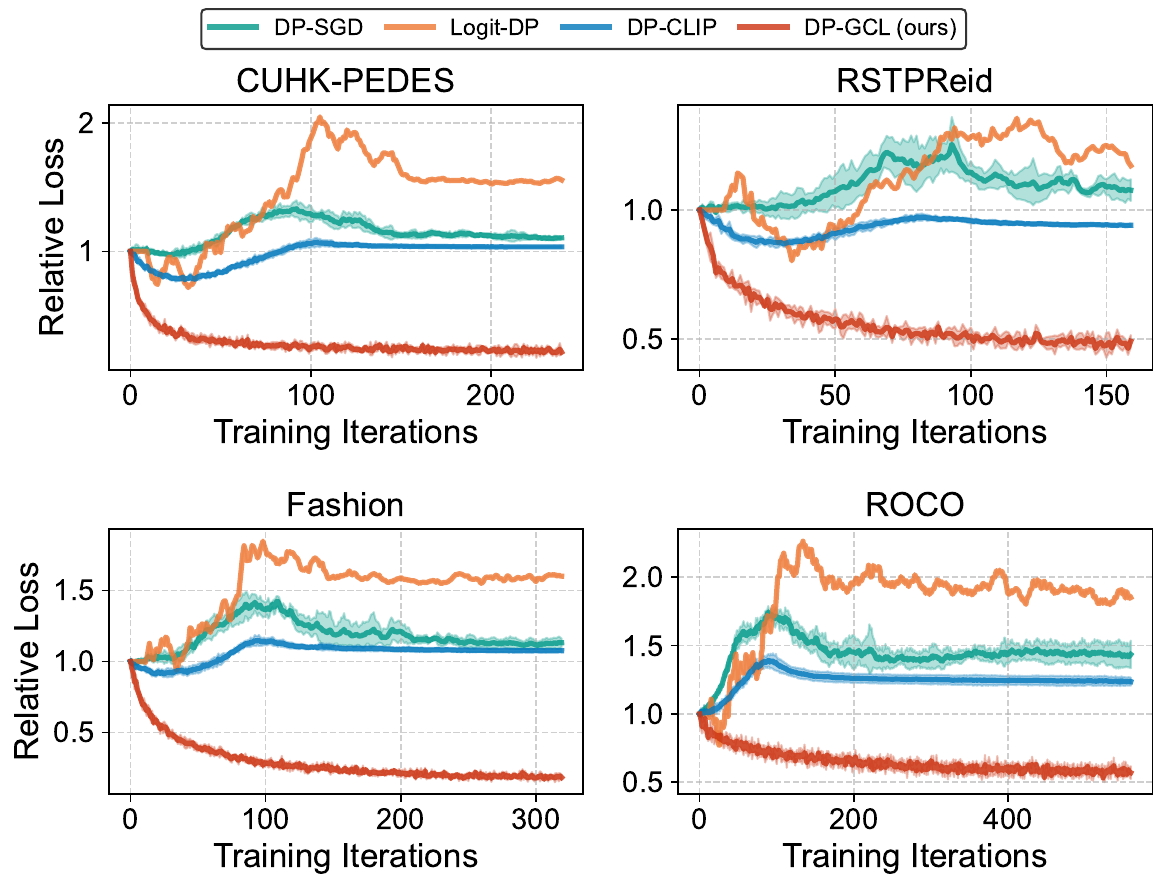}
        \captionof{subfigure}{Multi-modal Evaluation}
        \label{fig:loss_change_multi}
    \end{minipage}
    \caption{Relative loss (the current loss / the loss at the first iteration) of three baselines and \toolname during training.}
    \label{fig:loss_change}
\end{figure*}
\subsection{ViTs}
\label{ap:vits}

We introduce three ViT variants used in~\Cref{subsubsec:modelarch}:
\begin{itemize}[leftmargin=*]

\item \textbf{ViT-B16:} It shares the same model size (151M) as ViT-B32 with 12 Transformer layers, a hidden dimension of 768, but uses smaller 16×16 patches. It provides finer spatial resolution at the cost of higher computational load.
\item \textbf{ViT-L14:} A larger architecture with 24 Transformer layers and a 1024-dimensional hidden state (approximately 428M parameters). It uses 14×14 patches on 224×224 inputs, offering both increased model capacity and higher spatial granularity than ViT-B variants.
\item \textbf{ViT-L14-H:} This variant is based on ViT-L14 (428M) but pre-trained at a higher input resolution (typically 336×336). With the same 14×14 patch size, it captures significantly finer visual details, which is particularly beneficial for fine-grained cross-modal alignment.

\end{itemize}
\section{More Results}
\label{apsec:more_results}

\vspace{2.5mm}
\subsection{Convergence Efficiency}\label{ap:subsec:convergence}
\Cref{fig:loss_change} illustrates the relative loss (i.e., the current loss normalized by the initial loss) over the course of training iterations for both (a) uni-modal and (b) multi-modal tasks under the privacy budget \(\epsilon=10\). As shown in the figure, \toolname{} achieves significantly faster and deeper convergence compared to the three baselines—DP-SGD, Logit-DP, and DP-CLIP. 


\begin{figure}[!t]
    \centering
    \setlength{\abovecaptionskip}{0pt}
    \includegraphics[width=0.98\linewidth]{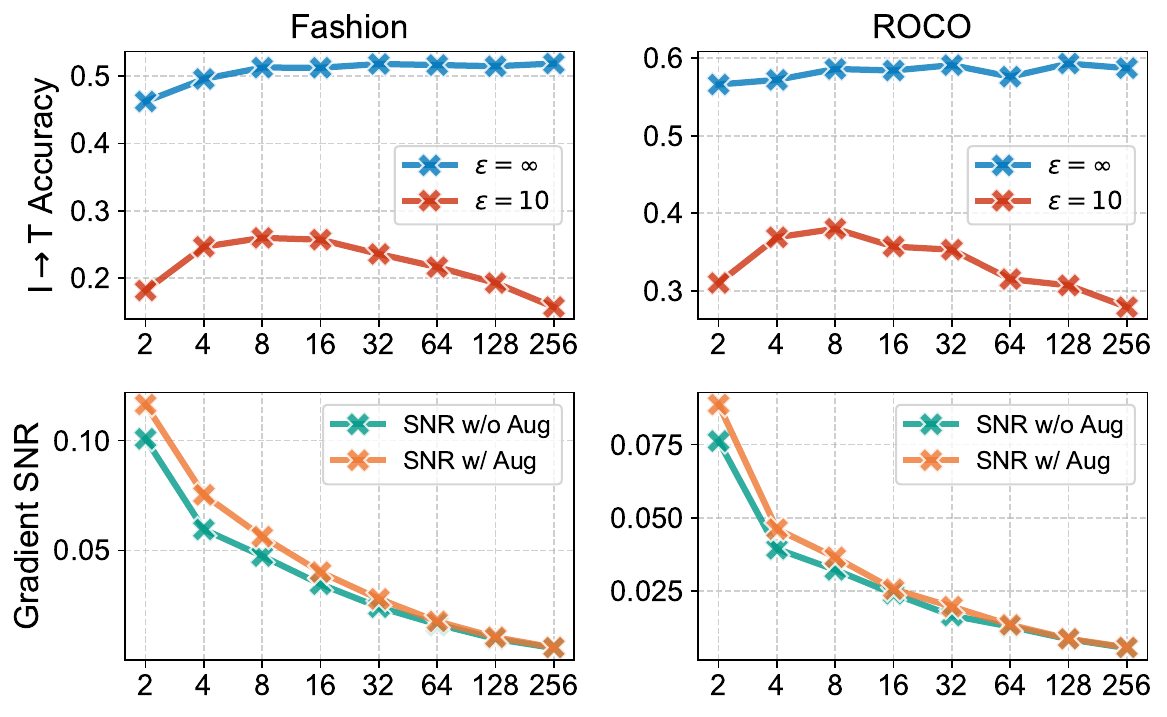}
    \caption{I→T retrieval accuracy (first row) and gradient Signal-to-Noise Ratio (SNR) of gradients (second row) with different group sizes.}
    \label{fig:tradeoff2}
    \vspace{-2mm}
\end{figure}

For uni-modal tasks, on F-MNIST and EuroSAT, \toolname{} rapidly reduces the relative loss to approximately 0.6–0.7 within the first 50 iterations and quickly stabilizes at a significantly lower final value, indicating highly effective optimization even under noise. On the more complex datasets CIFAR-10 and Camelyon, \toolname{} still exhibits steady and consistent descent, maintaining the lowest relative loss throughout the entire training process, whereas the baseline methods tend to plateau at much higher values near 0.9–0.95. This efficient convergence behavior strongly correlates with its superior downstream performance reported in \Cref{tab:main_cls}, confirming that \toolname{} not only converges faster but also reaches better optima under privacy constraints.

For multi-modal tasks (see \Cref{fig:loss_change_multi}), \toolname{} also demonstrates significantly more stable and efficient convergence compared to the three baseline methods. Across all datasets, \toolname{} rapidly reduces the relative loss in the early stages of training and maintains a consistently lower loss trajectory throughout the process. In contrast, the baseline methods exhibit higher variance, slower loss decay, or even divergence in some cases. For instance, on the Fashion dataset, both DP-SGD and Logit-DP show loss values that exceed the initial loss, highlighting their instability under noise in cross-modal settings.
Notably, \toolname{}’s smooth descent with minimal oscillation indicates that its optimization process is better aligned with the multi-modal contrastive objective, even under stringent privacy constraints. This robustness allows the model to efficiently learn meaningful joint representations, which is consistent with its  retrieval performance reported in Table~\ref{tab:main_retrieval}.

\subsection{Tradeoff of \toolname}\label{ap:subsec:tradeoff}


\Cref{fig:tradeoff2} presents the image-to-text (I→T) retrieval accuracy and gradient signal-to-noise ratio (SNR) of \toolname{} under different group sizes. The results demonstrate that \toolname{} strategically trades off a portion of negative samples in exchange for a substantially higher gradient SNR, ultimately achieving a more favorable privacy-utility tradeoff.

Although the utility degradation caused by reducing the group size is minimal when \(\epsilon=\infty\) (non-private setting) on the ROCO dataset, the performance of \toolname{} still drops noticeably under the privacy budget \(\epsilon=10\) when the group size becomes too small. This highlights the importance of maintaining a sufficiently large group size to ensure robust gradient quality under DP constraints.

\begin{table}[!t]
\setlength{\tabcolsep}{5.5pt}
\small
\centering
\caption{The peak of GPU memory usage and runtime analysis of baselines and \toolname for uni-modal and multi-modal tasks.}
\setlength{\tabcolsep}{3.4mm}{
\resizebox{0.46\textwidth}{!}{
\begin{tabular}{l|lcc}
\toprule
\textbf{Task} & \textbf{Algorithm} & \textbf{Memory} & \textbf{Runtime} \\
\midrule
\multirow{4}{*}{Uni-modal} & DP-SGD & 32.6 GB & 0.80 H \\
 & Logit-DP & 1.8 GB & 60.44 H \\
 & DP-CLIP & 32.6 GB & 0.80 H \\
 & \toolname & 33.2 GB & 0.99 H \\
\hline
\multirow{4}{*}{Multi-modal} & DP-SGD &22.5 GB & 2.43 H \\
 & Logit-DP & 2.1 GB & 72.18 H \\
 & DP-CLIP & 22.5 GB & 2.43 H \\
 & \toolname & 22.6 GB & 2.89 H \\
\bottomrule
\end{tabular}}}
\label{tab:computationalResource}
\end{table}

\subsection{Consumption of Computational Resource}
\label{subsec:resource}



\Cref{tab:computationalResource} presents the peak GPU memory usage and runtime of different baseline methods and our \toolname{} on both uni-modal (CIFAR-10) and multi-modal (PEDES) tasks. Notably, Logit-DP exhibits significantly higher computational overhead compared to other methods. It requires over 60 hours for uni-modal tasks and more than 72 hours for multi-modal tasks. This excessive runtime stems primarily from its quadratic time complexity with respect to the batch size, \(\mathcal{O}(B^2)\), which becomes prohibitively expensive when scaling to large batch sizes.

To ensure basic feasibility within our experimental setup, we followed the original authors’ recommendations and employed a small batch size combined with gradient accumulation. While this strategy effectively reduces memory consumption, it does not address the fundamental computational bottleneck. Consequently, despite its relatively low memory footprint, Logit-DP remains computationally impractical for large-scale or time-sensitive applications due to its inherently high time complexity.

In contrast, \toolname{} achieves a dramatic improvement in utility while incurring only a marginal increase in computational overhead compared to the baselines.

\begin{table}[!t]
\setlength{\tabcolsep}{5.5pt}
\small
\centering
\caption{Membership Inference Attack (MIA) Performance (TPR@1\%FPR, \%) on different datasets.
Lower values indicate stronger privacy protection.}
\setlength{\tabcolsep}{2.4mm}{
\resizebox{0.48\textwidth}{!}{
\begin{tabular}{l|cccc}
\toprule
\textbf{Method} &PEDES &  RSTPReid & Fashion & Roco \\
\midrule
DP-GCL w/ DP & 2.6±1.7 & 0.0±0.0 & 1.4±0.9  & 0.0±0.0 \\
DP-GCL w/o DP & 10.4±2.0 & 65.2±8.3 & 40.8±12.8 & 54.2±9.9 \\
\bottomrule
\end{tabular}}}
\label{tab:mia}
\end{table}

\begin{table}[!t]
\renewcommand{\arraystretch}{1.1}
\setlength{\tabcolsep}{2.5pt}
    \centering
    \caption{The performance of \toolname under \(\epsilon=10\) and without DP protection.}
    \label{tab:non-private}
\resizebox{0.48\textwidth}{!}{
    \begin{tabular}{l|cc|cc|cc|cc}
    \toprule
    \multirow{2}{*}{\textbf{Method}} & \multicolumn{2}{c|}{{F-MNIST}} & \multicolumn{2}{c|}{{CIFAR-10}} & \multicolumn{2}{c|}{{EuroSAT}} & \multicolumn{2}{c}{{Camelyon}} \\
    \Xcline{2-9}{0.5pt}
     & Linear & \(k\)-NN & Linear & \(k\)-NN & Linear & \(k\)-NN & Linear & \(k\)-NN \\
    \hline
    Non-Private & 92.6 & 92.7 & 60.3 & 59.4 & 88.4 & 89.2 & 83.6 & 84.5\\
    \toolname & 82.9 & 82.8 & 42.3 & 39.5 & 59.1 & 53.3 & 78.4 & 76.7\\
    \midrule
    \midrule
    \multirow{2}{*}{\textbf{Method}} & \multicolumn{2}{c|}{{PEDES}} & \multicolumn{2}{c|}{{RSTPReid}} & \multicolumn{2}{c|}{{Fashion}} & \multicolumn{2}{c}{{Roco}} \\
    \Xcline{2-9}{0.5pt}
     & I$\rightarrow$T & T$\rightarrow$I & I$\rightarrow$T & T$\rightarrow$I & I$\rightarrow$T & T$\rightarrow$I & I$\rightarrow$T & T$\rightarrow$I \\
    \hline
    Non-Private & 87.9 & 70.9 & 53.3 & 47.0 & 51.9 & 51.4 & 58.3 & 59.8\\
    \toolname & 65.9 & 48.1 & 38.0 & 34.5 & 26.3 & 24.7 & 37.5 & 38.4\\
    \bottomrule
\end{tabular}
}
\end{table}

\begin{table}[!t]
\renewcommand{\arraystretch}{1.1}
\setlength{\tabcolsep}{3.5pt}
    \centering
    \caption{Classification accuracy of Linear Probing and $k$-NN (\(k=3)\) under transfer learning settings, where ImageNet is used for pre-training.}
    \label{tab:main_cls_transfer_ap}
    \resizebox{0.48\textwidth}{!}{
    \begin{tabular}{l|cc|cc|cc|cc}
    \toprule
    \multirow{3}{*}{\textbf{Method}} & \multicolumn{4}{c|}{$\epsilon=1$} & \multicolumn{4}{c}{$\epsilon=10$}\\
    \Xcline{2-9}{0.5pt}
    & \multicolumn{2}{c|}{{EuroSAT}} & \multicolumn{2}{c|}{{Camelyon}} & \multicolumn{2}{c|}{{EuroSAT}} & \multicolumn{2}{c}{{Camelyon}} \\
    \Xcline{2-9}{0.5pt}
     & \centering Linear & \(k\)-NN & Linear & \(k\)-NN & Linear & \(k\)-NN & Linear & \(k\)-NN \\
    \hline
    Base & 50.5 & 33.4 & 69.7 & 52.4 & 50.5 & 33.4 & 69.7 & 52.4 \\
    DP-SGD~\cite{dpsgd} & 48.5 & 34.6 & 68.7 & 55.6 & 48.9 & 36.2 & 69.7 & 55.6\\
    Logit-DP~\cite{pair-level} & 49.1 & 33.5 & 70.6 & 51.4 & 49.1 & 36.4 & 67.1 & 53.8\\
    DP-CLIP~\cite{batch-level} & 48.1 & 35.1 & 70.6 & 53.4 & 50.0 & 36.4 & 67.1 & 53.8 \\
    \hline
    \rowcolor{gray0} \toolname(ours) & 50.6 & 37.8 & 69.8 & 56.4 & 53.5 & 44.4 & 69.7 & 59.9 \\
    \bottomrule
\end{tabular}
}
\end{table}



\subsection{Membership Inference Attacks}
\label{subsec:mia}
We conduct white-box Membership Inference Attacks (MIA) on DP-GCL both without DP and with DP under the privacy budget \(\epsilon = 10\). As shown in \Cref{tab:mia}, DP-GCL with DP consistently achieves very low TPR@1\%FPR across all evaluated datasets. This result highlights the excellent resistance of our method against membership inference attacks, even when strong adversaries have full access to the model gradients and parameters.

In contrast, the non-private variant of DP-GCL suffers from a sharp increase in TPR@1\%FPR, indicating that the model becomes highly vulnerable to privacy leakage without the protection of DP. These results underscore the critical importance of incorporating DP in our framework.

\vspace{0.6mm}
\subsection{Performance Gap to Non-Private Baseline}
\label{subsec:non-private}
Table~\ref{tab:non-private} compares the performance of the non-private baseline with that of \toolname{} under DP protection (\(\epsilon = 10\)) across eight datasets. The results reveal that the performance degradation caused by DP varies notably across different datasets. Simpler datasets (e.g., F-MNIST) and larger datasets (e.g., PEDES and Camelyon) generally suffer relatively smaller drops in performance. In contrast, the small and more complex multi-modal dataset Fashion experiences a more significant accuracy loss. Overall, there remains substantial room for improvement in narrowing this performance gap.

\vspace{0.6mm}
\subsection{Transfer Learning}
\label{subsec:transfer_app}


Table~\ref{tab:main_cls_transfer_ap} reports the linear probing and $k$-NN accuracy using ImageNet pre-trained representations under the privacy budgets $\epsilon=1$ and $\epsilon=10$. As shown, \toolname{} consistently outperforms or matches the DP baselines across both datasets. It achieves the highest $k$-NN accuracy on EuroSAT (37.8\% at $\epsilon=1$, 44.4\% at $\epsilon=10$) and on Camelyon (56.4\% at $\epsilon=1$, 59.9\% at $\epsilon=10$). Notably, under $\epsilon=10$, \toolname{} also delivers the best linear probing accuracy on EuroSAT.

The performance on Camelyon is slightly less stable, especially for linear probing. This is primarily due to the significant domain shift between histopathology images and ImageNet’s natural images. Such a domain gap limits feature transferability and is further amplified by the addition of privacy noise. In contrast, EuroSAT exhibits better transfer robustness thanks to its smaller semantic differences from natural images.

\end{document}